\newcommand{\etal}{{\it et~al.\ }}
\newcommand{\mycase}[1]{{\underline{Case~#1}:}}
\newtheorem{theorem}{Theorem}
\newtheorem{lemma}[theorem]{Lemma}
\newtheorem{fact}[theorem]{Fact}
\newenvironment{proof}{{\noindent\bf Proof:\/}}{$\Box$\vskip 0.1in}
\newcommand{\qed}{}
\newcommand{\calA}{{\cal A}}
\newcommand{\calS}{{\cal S}}
\newcommand{\calR}{{\cal R}}
\newcommand{\e}{\mathrm{e}}
\newcommand{\braced}[1]{{\left\{#1\right\}}}
\newcommand{\brackd}[1]{{\left[#1\right]}}
\newcommand{\suchthat}{{\,:\,}}
\newcommand{\before}{\lhd}
\newcommand{\beforeeq}{\unlhd}
\newcommand{\after}{\rhd}
\newcommand{\aftereq}{\unrhd}
\newcommand{\dele}{{\mbox{\tt d}}}
\newcommand{\take}{{\mbox{\tt t}}}
\newcommand{\Greedy}{{\mbox{\sf Greedy}}}
\newcommand{\Rmix}{{\mbox{\sf RMix}}}
\newcommand{\UniRand}{{\mbox{\sf UniRand}}}
\newcommand{\DecQueEFH}{\mbox{\sf DecQueEFH}}
\newcommand{\MarkAndPick}{{\mbox{\sf MarkAndPick}}}
\newcommand{\FIFOQueEH}{{\mbox{\sf FIFOQueEH}}}
\newcommand{\myif}{{\underline{if}}}
\newcommand{\mythen}{{\underline{then}}}
\newcommand{\myelse}{{\underline{else}}}
\definecolor{myBlue}{rgb}{0.5,0.5,1}
\definecolor{myRed}{rgb}{0.9,0.3,0.1}
\definecolor{myYellow}{rgb}{0.9,0.9,0}
\definecolor{myGreen}{rgb}{0.1,1,0}
\title{Generalized Whac-a-Mole}
\author{%
Marcin Bienkowski\thanks{Institute of Computer Science,
University of Wroc{\l}aw,
50-383 Wroc{\l}aw, Poland.
Supported by MNiSW grants number N206 001 31/0436, 2006--2008 and N N206 1723 33, 2007--2010.}
\and
Marek Chrobak\footnote{Department of Computer Science,
University of California, Riverside, CA 92521, USA.
Supported by NSF grants OISE-0340752 and CCF-0729071.
}
\and
Christoph D\"urr\footnote{CNRS, LIX UMR 7161, Ecole Polytechnique
91128 Palaiseau, France.
Supported by ANR Alpage.}
\and
Mathilde Hurand\footnotemark[3]
\and
Artur Je{\.z}\footnotemark[1]
\and
{\L}ukasz Je{\.z}\footnotemark[1]
\and
Jakub {\L}opusza{\'n}ski\footnotemark[1]
\and
Grzegorz Stachowiak\footnotemark[1]
}
\date{}
\begin{document}

\maketitle

\begin{abstract}
We consider online competitive algorithms
for the problem of collecting weighted items from a
dynamic set $\calS$, when items are added to or deleted from $\calS$
over time. The objective is to maximize the total weight of
collected items. We study the general version, as well
as variants with various restrictions, including the
following: the \emph{uniform case}, when all items have
the same weight, the \emph{decremental sets}, when
all items are present at the beginning and only deletion
operations are allowed, and \emph{dynamic queues}, where
the dynamic set is ordered and only its prefixes can
be deleted (with no restriction on insertions).
The dynamic queue case is a generalization of bounded-delay
packet scheduling (also referred to as buffer management).
We present several upper and lower bounds on the competitive
ratio for these variants.
\end{abstract}


\section{Introduction}



Whac-a-mole is an old arcade game, where plastic ``moles'' pop out
of holes in the machine for short periods of time, in some
unpredictable way, and the player uses a mallet to ``whack'' as
many moles as possible.\footnote{No mole was harmed in the course of this research.}
In the generalized version that we
consider, multiple moles may be present at the same time, and
different moles may have different values.

In a more formal setting, we think of it as a dynamic set $\calS$ of
weighted items (moles).
Before each step, some items
can be deleted from $\calS$ and other items can be added to $\calS$.
We are allowed to collect one item (whack one mole) from $\calS$ per step.
Each item can be collected only once (the mallet does its job). The objective is to maximize the total weight of the collected items.

To our knowledge, this simple and fundamental problem has
not been explicitly addressed in the literature.
By placing appropriate assumptions on the
structure of $\calS$ or on the type of allowed operations,
one can obtain a number of natural special cases,
some of which are related to known online problems.

We study the general case of dynamic sets, as well
as some restricted cases, among which we focus on the
following versions:

\emph{Dynamic Queue:} In this case, $\calS$ represents a list, i.e.
the items in $\calS$ are ordered. Items can be added to $\calS$ at
any place, but only a prefix of $\calS$ can be deleted.
A queue is called a \emph{FIFO queue} if insertions are allowed
only at the end.

\emph{Decremental Sets:} Here, all items are
added at the beginning, and only deletions are
allowed afterwards. In particular, one can consider
the case of decremental queues, where only prefix-deletion
operations are allowed.

The case of dynamic queues generalizes the well-studied
problem of bounded-delay packet scheduling (a.k.a.
buffer management), or, equivalently, the
problem of scheduling unit jobs with deadlines for
maximum weighted throughput. In this problem, packets
with values and deadlines arrive in a buffer of a network link. At each step,
we can send one packet along the link. The objective is to maximize the
total value of packets sent before their deadlines. This
is a special case of our dynamic queue problem, where packets are
represented by items ordered according to deadlines.
The difference is crucial though: in packet scheduling,
packet arrival times are unknown but their departure
times are known, while in dynamic queues \emph{both} the arrival
and departure times are not known.
The FIFO case generalizes the version of packet
scheduling with agreeable deadlines.


Competitive algorithms for various versions of bounded-delay packet
scheduling problem have been extensively studied
\cite{AnMaZh03,CCFJST06,ChiFun03b,Hajek01,KLMPSS04,KeMaSt05,LiSeSt07}.
In particular, it is known that no deterministic online algorithm can have
competitive ratio better than $\phi\approx 1.618$
\cite{AnMaZh03,ChiFun03b}, and algorithms with competitive ratio
$\approx 1.83$ have been recently developed \cite{LiSeSt07,EngWes07}.
These are the best lower and upper bounds
for this problem and closing this gap remains a tantalizing open problem.
For agreeable deadlines, an upper bound of $\phi$ has been
established in \cite{LiSeSt05}.

The dynamic set problem has some indirect connections to other
packet scheduling problems, where the objective is
to maximize the value of packets reaching their destination under
various scenarios
\cite{AzaRic04B,BFKMSS04,KLMPSS04,KeMaSt05}.

A different, metric version of whac-a-mole was considered in
\cite{GuKrMeVr06}.
In this approach moving the mallet to a mole takes time and the duration of moles'
exposure is known. However, their results are inapplicable in our model.


\smallskip
\emph{Our results.}
We first consider the general case of dynamic sets.
For deterministic algorithms, it is quite easy to show
a lower bound of $2$ (even for two items in the decremental
case), and this ratio can
be achieved by a simple greedy algorithm that always
collects the heaviest item.

For randomized algorithms, we focus on the
\emph{uniform case}, when all items have weight $1$.
We show that no memoryless
randomized algorithm can achieve competitive ratio better than $2$.
We then study the uniform decremental case, for which we
give an online randomized algorithm with competitive
ratio $\e/(\e-1)$ (against an oblivious adversary) and prove
a matching lower bound.

Most of our results concern dynamic queues. In the deterministic case,
even for decremental queues, it is
quite easy to show a lower bound of $\phi$. We improve
this bound, by proving a lower bound of $\approx 1.63$, that
applies even to the decremental case.
We also show that no memoryless algorithm can have a ratio
smaller than $2$. This contrasts with the
result of Englert and Westerman~\cite{EngWes07} who
gave a memoryless algorithm for packet scheduling with
ratio $\approx 1.893$. Thus, at least for memoryless
algorithms, knowing the exact deadlines helps.

As for upper bounds, we give the following deterministic
algorithms:
(i) A $1.737$-competitive algorithm for decremental queues (beating
the ratio $2$ of the na{\"i}ve greedy algorithm).
(ii) A $1.8$-competitive algorithm for FIFO queues.
(iii) A $\phi$-competitive algorithm for dynamic queues
when the item weights are non-decreasing (w.r.t.~their
position in the queue). This last result has implications
for packet scheduling, as all lower bound proofs
for that problem use instances where
packets' weights increase with deadlines.
Thus, either the ratio $\phi$ for the packet scheduling problem can
be achieved, or, to improve
the lower bound, one would have to use non-monotone instances
in the proof. Our competitive analysis uses a novel
invariant technique involving dominance relations
between sets of numbers, and is likely to find applications
in the analysis of packet scheduling.

Finally, we address the case
of dynamic queues and randomized algorithms. The
algorithm~{\Rmix} \cite{CCFJST06} can be easily
adapted to dynamic queues, achieving competitive ratio of $\e/(\e-1)$ against
an adaptive adversary. (In spite of the same value, this
result is not related to our analysis of the uniform dynamic sets.)
We prove a matching lower bound, which says that
no memoryless randomized algorithm for dynamic
queues can have competitive ratio smaller than $\e/(\e-1)$.

Due to space limitations some proofs are presented in the appendix.


\section{Preliminaries}

%


We refer to the items currently in $\calS$ as \emph{active}.
In other words, those are the items that have been already
inserted but not yet deleted.
An item is called \emph{pending} for an algorithm $\calA$ if it is active but not
yet collected by $\calA$. We denote the weight of item $x$
as $w_x$ and the total weight of a set of items $X$ as $w(X)$.

An instance of the problem is defined by a sequence $I$ of item
insertions or deletions. A solution consists of a \emph{selection
sequence} that specifies items selected at each step.
An optimal solution is computable in polynomial-time:
represent the instance as a bipartite graph $G$ whose
partitions are items and time steps. Item $a$ is connected
to the time steps when $a$ is active with edges of weight $w_a$.
The maximum-weight matching in $G$ represents an
optimal collection sequence.

When $\calS$ is a dynamic queue, $\calS$ is
represented by a list. We use symbol ``$\before$" to represent
the list ordering, i.e. $a \before b$ means that $a$ is
before $b$ in the list. In case of dynamic queues,
optimal solutions have special structure
that we explore in our competitiveness proofs, namely
they satisfy (w.l.o.g.) the following
\emph{Earliest-Expiration-First (EEF) Property:}
If $a,b$ are active at the same time, $a\before b$,
and both $a$, $b$ are collected, then $a$ is collected before $b$.

An online algorithm
$\calA$ is called \emph{$\calR$-competitive}, if for any instance $I$,
the gain of $\calA$ on $I$ is at least the optimum gain on $I$ divided
by $\calR$. (An additive constant is sometimes also
allowed in this bound; in all our upper bounds this constant is $0$, as
the weights can be scaled up.)
The \emph{competitive ratio} of $\calA$ is the smallest $\calR$
for which $\calA$ is $\calR$-competitive.


\section{Dynamic Sets}



For general dynamic sets,
it is easy to show the lower and upper bounds of $2$
on the competitive ratio of deterministic online algorithms.
To prove the lower bound, the adversary
can start with two items, say $a$
and $b$ with $w_a = w_b = 1$. Assume, by symmetry,
that the algorithm first collects $a$. Then the
adversary can collect $b$ in the first step, delete it, and
collect $a$ in the next step, while the algorithm has no
items to collect.

This bound can be achieved by a simple {\Greedy} algorithm:
at each step, if there is at least one pending item,
collect the maximum-value pending item.

\begin{fact}
{\Greedy} is $2$-competitive for dynamic sets.
\end{fact}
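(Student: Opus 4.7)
The plan is to use a standard charging argument, assigning each item collected by an optimal offline solution \textsc{Opt} a charge against at most two items collected by \Greedy, so that the total weight charged to any single \Greedy item is at most twice its own weight.

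First, I would fix an optimal solution and denote by $G$ and $O$ the multisets of items collected by \Greedy\ and by \textsc{Opt}, respectively. I would split $O$ as $O = (O \cap G) \cup (O \setminus G)$. The items in $O \cap G$ contribute at most $w(G)$ trivially: each such item is charged to its own copy in $G$. The main observation concerns an item $x \in O \setminus G$, collected by \textsc{Opt} at some step $t_x$. Since \Greedy\ never collects $x$, $x$ is still pending for \Greedy\ at step $t_x$ (it is active because \textsc{Opt} is just collecting it, and not yet taken by \Greedy). In particular the set of \Greedy-pending items at step $t_x$ is nonempty, so \Greedy\ collects some item $g_{t_x}$, and by the definition of \Greedy\ we have $w_{g_{t_x}} \ge w_x$.

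The second step is to observe that the mapping $x \mapsto g_{t_x}$ is injective: different $x \in O \setminus G$ are collected by \textsc{Opt} at different steps, and \Greedy\ collects at most one item per step. Hence
\[
w(O \setminus G) \;\le\; \sum_{x \in O \setminus G} w_{g_{t_x}} \;\le\; w(G).
\]
Combining with $w(O \cap G) \le w(G)$ yields $w(O) \le 2\, w(G)$, which is exactly $2$-competitiveness.

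There is no real obstacle here; the only thing one must be slightly careful about is the ``$x$ is pending for \Greedy\ at step $t_x$'' claim, and the fact that \Greedy\ indeed collects \emph{something} at every step in which any item from $O \setminus G$ is scheduled by \textsc{Opt}. Both follow immediately from the definitions once one notes that \textsc{Opt} collecting $x$ at $t_x$ certifies that $x$ is active at $t_x$, while $x \notin G$ certifies that $x$ has not been removed from \Greedy's pending set by a previous collection.
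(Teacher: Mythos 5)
Your proof is correct and uses essentially the same charging idea as the paper: the key observation in both is that when \textsc{Opt} collects an item not yet taken by \Greedy, that item is pending for \Greedy\ at that step, so \Greedy's choice at that step weighs at least as much. Your $O = (O\cap G)\cup(O\setminus G)$ decomposition and the paper's per-item "at most two charges, each $\le w_a$" bookkeeping are just two phrasings of the same argument.
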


\begin{proof}
The proof is a straightforward adaptation of the proof for packet
scheduling \cite{Hajek01,KLMPSS04},
we include it for completeness sake.
The items collected by the adversary
are charged to the items collected by {\Greedy}.
Suppose the adversary collects an item $b$ at time $t$.
If this item was collected at time $t$ or earlier
by {\Greedy}, we charge it to that item in {\Greedy}'s sequence.
Otherwise, we charge it to the item collected
by {\Greedy} at time $t$.

Let $a$ be an item collected by {\Greedy} at some
time $u$. It receives at most two charges: one
from itself, if it was collected by the
adversary at time $u$ or later, and
the other one from the item $b$ collected by
the adversary at time $u$.
If $a$ receives a charge from $b$, then $b$
is pending for {\Greedy} at time $t$, and
therefore $w_b \le w_a$.
Therefore the charge to $a$ is at most $2w_a$.

Summarizing, all adversary items are charged
to our items, and each our item receives a charge
of at most twice its weight.
Thus {\Greedy} is $2$-competitive.
\qed
\end{proof}

Now we turn our attention to randomized algorithms.
For the adaptive adversary it is not hard to show a lower bound of $2$.
The adversary strategy is this: issue
$n$ items of weight $1$. Collect any item $a$ that
is collected by the algorithm with
probability at most $1/n$. Let $b$ be the
item collected by the algorithm.
If $b=a$, remove all items. If
$b\neq a$, remove all items except $b$ and
collect $b$. With probability at most
$1/n$, we have $b=a$ and both the algorithm and the adversary
collect $b$. With probability at least
$1-1/n$, the algorithm gets one item and the
adversary gets two items. So the ratio is
arbitrarily close to $2$.

For oblivious adversaries we concentrate on the uniform decremental case.
We show that for any randomized algorithm
the optimal strategy for the adversary is to repeatedly take and remove
an active item.
We consider {\UniRand} algorithm, which at each step collects
one of the pending items with equal probability.

For the lower bound we use the Yao min-max principle and consider inputs
which are constructed by the following random process: at each step choose
an active item uniformly at random, collect it, and delete it after the step.

Let $E_{a,p}$ be the expected number of items collected by an algorithm
if $a$~items are active and, among them, $p$ items are pending.
It can be shown that both in {\UniRand} analysis and in the lower bound construction,
$E_{a,p}$ satisfies the following recursive formula:
\begin{eqnarray*}
E_{a,0} = 0, \quad E_{a,1} = 1, \quad E_{a,p} &=&
	\frac{a-p+1}{a} \cdot E_{a-1,p-1} + \frac{p-1}{a} \cdot E_{a-1,p-2} + 1 \enspace.
\end{eqnarray*}
The unique solution of this formula can be estimated by $E_{a,p} \approx a(1-(1 - 1/a)^p)$.
Using exact approximations of $E_{a,p}$, we may prove the following theorems.

\begin{theorem}
\label{thm:sets-uniform-lower}
The competitive ratio of any randomized algorithm
for the uniform case of decremental sets is at least $\e/(\e-1)$.
\end{theorem}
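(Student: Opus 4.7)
The plan is to apply Yao's min-max principle: to lower-bound the competitive ratio of any randomized algorithm against an oblivious adversary, it suffices to exhibit an input distribution on which the ratio of expected offline optimum to expected online gain tends to $\e/(\e-1)$ for every deterministic algorithm. I would use the distribution described before the theorem: for each $n$, start with $n$ unit-weight items, and at each of the $n$ steps have the adversary pick one currently active item uniformly at random, count it towards its own offline sequence, and delete it at the end of the step. The offline optimum on every realization is then exactly $n$.

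Next I would establish the key symmetry claim: every deterministic online algorithm has expected payoff $E_{n,n}$ on this distribution. Because items have equal weight and the adversary's deletion is uniform among active items, the distribution of the next state from any configuration with $a$ active and $p$ pending items does not depend on which of the $p$ pending items the algorithm picks. A short induction on $a$ (conditioning on what the adversary deletes after the algorithm's collection) then yields the quoted recurrence: the algorithm gains $1$, after which with probability $(a-p+1)/a$ the adversary removes the just-collected item or some other previously-collected item, giving state $(a-1,p-1)$, and with probability $(p-1)/a$ it removes a different pending item, giving $(a-1,p-2)$.

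The main technical step, and the principal obstacle, is turning the approximation $E_{a,p}\approx a(1-(1-1/a)^p)$ into a rigorous asymptotic for $E_{n,n}$. I would substitute the ansatz $\hat E_{a,p}:=a(1-(1-1/a)^p)$ into the recurrence, bound the residual, and conclude by induction that $E_{n,n}=n(1-(1-1/n)^n)+O(1)$ and hence $E_{n,n}/n\to 1-1/\e$. Because the recurrence mixes two coordinates, a naive induction is delicate; a cleaner alternative is to write $n-E_{n,n}$ as the expected number of items deleted before being collected and to estimate this combinatorially, for instance by symmetry across items (each is missed with the same probability $q_n\to 1/\e$) or by a coupling with an independent-thinning process.

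Finally, since $n/E_{n,n}\to \e/(\e-1)$ from below, for every $R<\e/(\e-1)$ and any additive constant $c$ one can choose $n$ large enough that $n>R\cdot E_{n,n}+c$; by Yao's principle this rules out any randomized algorithm with competitive ratio strictly less than $\e/(\e-1)$ against an oblivious adversary, which is the claimed bound.
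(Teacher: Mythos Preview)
Your overall plan matches the paper's: Yao's principle applied to the same uniform-deletion distribution, the observation that every deterministic algorithm earns $E_{n,n}$ in expectation, the same recurrence for $E_{a,p}$, and the conclusion via $E_{n,n}/n \to 1-1/\e$. The one substantive difference is in the step you correctly flag as ``the principal obstacle.'' Rather than tracking the ansatz $a(1-(1-1/a)^p)$ and bounding residuals, or arguing combinatorially about per-item miss probabilities, the paper proves directly by induction on $p$ that
\[
E_{a,p} \;\le\; (a+1)\bigl(1-\e^{-p/a}\bigr) + 1,
\]
using only the inequality $\e^x \ge 1+x$ in the inductive step. Replacing $(1-1/a)^p$ by $\e^{-p/a}$, and loosening the leading coefficient from $a$ to $a+1$ together with the additive $+1$, is exactly what makes the two-variable induction close cleanly; one then reads off $E_{n,n} \le (n+1)(1-1/\e)+1$ without any residual analysis. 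Your proposed alternatives (symmetry to get a common miss probability $q_n\to 1/\e$, or a coupling with independent thinning) are in the right spirit and can be made to work, but each requires additional care, whereas the paper's explicit upper bound is a one-page induction. Incidentally, the bound $a(1-(1-1/a)^p)$ you mention is what the paper uses in the \emph{opposite} direction, as a lower bound on $E_{a,p}$ in the proof that {\UniRand} is $\e/(\e-1)$-competitive.
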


\begin{theorem}
\label{thm:uni-rand}
Algorithm {\UniRand} is $\e/(\e-1)$-competitive for the uniform case of decremental sets.
\end{theorem}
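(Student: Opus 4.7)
The plan is to reduce the analysis of {\UniRand} on an arbitrary decremental uniform instance to a canonical adversarial instance in which the evolution of the algorithm is governed by the recurrence for $E_{a,p}$ stated just above the theorem, and then to bound $E_{n,n}$, where $n$ denotes the initial number of items in $\calS$. Because all items have weight $1$ and are otherwise indistinguishable, any oblivious deletion schedule can be symmetrized without changing either {\UniRand}'s or the adversary's expected gain: it suffices to specify the number $d_t$ of items deleted at each step and then to pick those items uniformly at random from the currently active set. Under this symmetrization, the joint run of {\UniRand} and the adversary becomes a Markov chain on the pair $(a,p)$ of active and pending counts, starting in state $(n,n)$.

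Next, I would show that the worst case for the competitive ratio is the canonical instance in which exactly one item is deleted after each of the first $n$ steps. Intuitively, delaying a deletion only gives {\UniRand} an extra chance to reduce its pending set while leaving the optimum unchanged, and batching several deletions into one step hurts {\opt} by at least as much as it hurts {\UniRand}. The formalization is by an exchange/coupling argument: any step with $d_t \neq 1$ is replaced by one or more single-deletion steps, and one verifies, through a coupling of the two runs, that the ratio only worsens. This reduction is the main technical hurdle of the proof.

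On the canonical instance, {\opt} collects all $n$ items, while the state $(a,p)$ under {\UniRand} evolves exactly as prescribed by the recurrence preceding the theorem, so that {\UniRand}'s expected gain equals $E_{n,n}$. It then remains to bound this quantity from below. By induction on $a$ using the recurrence, I would establish
\begin{equation*}
E_{a,a} \;\geq\; a\left(1 - \left(1 - \tfrac{1}{a}\right)^{a}\right) \enspace,
\end{equation*}
and then invoke the standard inequality $(1 - 1/a)^a \leq 1/\e$ to conclude $E_{n,n} \geq n(1 - 1/\e)$. Together with the fact that {\opt} achieves $n$ on the canonical instance, this yields a competitive ratio of at most $n/E_{n,n} \leq \e/(\e-1)$, matching the lower bound of Theorem~\ref{thm:sets-uniform-lower}.
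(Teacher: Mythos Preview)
Your proposal shares the paper's overall architecture: exploit the symmetry of {\UniRand} to reduce to a canonical adversary, and then lower-bound {\UniRand}'s expected gain via the recurrence for $E_{a,p}$ together with the closed-form estimate $E_{a,p}\ge a\bigl(1-(1-1/a)^p\bigr)$. The final numerical step---bounding $E_{n,n}$ and comparing to $\opt$---is exactly the paper's Lemma~\ref{lemma:upper_bound_estimation} specialized to $p=a$, combined with $(1-1/n)^n<1/\e$.

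The divergence is in the reduction. You aim to show directly that, among all instances on $n$ items, the canonical one (one deletion per step for $n$ steps, hence $\opt=n$) maximizes the competitive ratio; your exchange must therefore compare \emph{ratios} across instances with different values of $\opt$. This is delicate: ``un-batching'' a multiple deletion adds one unit to $\opt$ and some $\Delta\le 1$ to {\UniRand}'s expectation, and for the ratio to move in the right direction you need $\Delta$ to be at most the current ratio of {\UniRand}'s gain to $\opt$---a state-dependent inequality that the coupling sketch does not supply. Likewise, ``delaying a deletion \ldots\ leaving the optimum unchanged'' is not generally true: shifting deletion times can change $\opt$.

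The paper sidesteps all of this by encoding adversary strategies as strings over $\{\take,\dele\}$ and \emph{fixing} $k=\opt$. It proves a swap lemma (Lemma~\ref{lemma:td_inversing}) showing $E_{a,p}[S_1\take\dele\,S_2]\ge E_{a,p}[S_1\dele\take\,S_2]$, which, iterated while preserving feasibility, drives any $k$-strategy to $\dele^{\,a-k}(\take\dele)^k$ and hence to $E_{k,k}[(\take\dele)^k]$ (Lemma~\ref{lemma:k_optimal_strategy}). Only {\UniRand}'s gain is compared, never a ratio; the dependence on $k$ enters only at the end, via the monotonicity of $(1-1/k)^k$. If you want to salvage your exchange route, the cleanest fix is precisely this: keep $\opt$ fixed during the exchange and compare gains, which is what the swap lemma does.
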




\section{Dynamic Queues---Deterministic Algorithms}



In this section we consider deterministic online
algorithms for the case when $\calS$ is a
dynamic queue, that is, $\calS$ is an ordered list and only a prefix
of $\calS$ can be deleted.
In the fully dynamic case, items can be inserted anywhere,
while in the decremental case, all items are inserted at the beginning.

\begin{fact}
Every deterministic algorithm for
queues has competitive ratio at least
$\phi \approx 1.618$, even for decremental queues and with items
of ``life span" at most $2$.
\end{fact}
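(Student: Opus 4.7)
The plan is to adapt the standard golden-ratio adversary from bounded-delay packet scheduling to the decremental queue setting, using a two-step game with a two-item initial queue. Consider the instance whose initial queue consists of $a_1 \before a_2$ with weights $w_{a_1}=1$ and $w_{a_2}=\phi$; the adversary commits to this queue and then chooses prefix deletions adaptively based on the algorithm's action at step~$1$.

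The adversary's strategy is as follows. If the algorithm collects $a_2$ at step~$1$, then before step~$2$ the adversary deletes the prefix $\{a_1\}$; at step~$2$ no pending item remains for the algorithm, so it gains only $\phi$, whereas the offline optimum (knowing this deletion in advance) collects $a_1$ at step~$1$ and $a_2$ at step~$2$ for a total gain of $1+\phi$. Otherwise the algorithm collects $a_1$ or nothing at step~$1$, and the adversary deletes the entire prefix $\{a_1,a_2\}$ before step~$2$. In this case the algorithm gains at most~$1$, while the offline optimum collects $a_2$ at step~$1$ for a gain of $\phi$.

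In both cases the ratio of the offline optimum to the algorithm's gain is at least $(1+\phi)/\phi = \phi$, using the identity $\phi^2 = \phi+1$. Moreover, each of $a_1$ and $a_2$ is active during at most two steps, so the life-span-$2$ restriction is satisfied. There is no real obstacle in this argument; the only delicate point is the timing of deletions (they are carried out at the beginning of the following step, so that both the algorithm and the offline optimum can still collect any active item during step~$1$), together with a short verification that the claimed offline schedules are indeed optimal under the two possible adversary responses.
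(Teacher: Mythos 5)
Your proposal is correct and matches the paper's own proof essentially verbatim: the same two-item initial queue with weights $1$ and $\phi$, the same case split on the algorithm's first move, the same deletions, and the same ratio computation via $\phi^2 = \phi + 1$. The only cosmetic additions are explicitly handling the ``algorithm does nothing'' sub-case and remarking on deletion timing, neither of which changes the substance.
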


\begin{proof}
We start with two items in the list, $a \before b$,
with values $w_a = 1$ and $w_b = \phi$. If the algorithm
chooses $b$, the adversary chooses $a$ and deletes it,
and in the next step he chooses $b$, so the ratio is
$(w_a+w_b)/w_b = (1+\phi)/\phi = \phi$.
If the algorithm chooses $a$ in the first step,
the adversary chooses $b$ and
deletes both items, so the ratio is $w_b/w_a = \phi$ again.
\qed
\end{proof}


This section includes some lower and upper bounds for
the competitive ratio of deterministic online algorithms.
Starting with lower bounds,
we first prove that no deterministic algorithm for
the dynamic queue---in fact, even for the decremental
case---can achieve a competitive ratio better than $1.63$.
For memoryless algorithms we give a lower bound of $2$.

Our first upper bound concerns decremental queues, for
which we present a deterministic online algorithm
{\DecQueEFH} with competitive ratio $\approx 1.737$.
Next, we present a deterministic online algorithm
{\FIFOQueEH} that achieves competitive ratio $1.8$ for
FIFO Queues. We conclude this section with the algorithm
{\MarkAndPick} that
achieves competitive ratio $\phi$ for dynamic queues
in which the item weights are non-decreasing (this
also gives a $\phi$-competitive algorithm for
scheduling packets with non-decreasing weights).


\subsection{Lower Bounds for Decremental Queues}

We prove that no online deterministic algorithm can
have a competitive ratio smaller than $1.63$ for decremental queues.
The proof is by presenting an adversary's strategy that forces any
deterministic online algorithm $\calA$ to gain less than
$1/1.63$ times the adversary's gain.


\paragraph{Adversary's strategy.}
To get a cleaner analysis,
we first present the argument for the dynamic case (with insertions allowed),
and explain later how it extends to the decremental case. We assume that the
items appear gradually, so that at each step the algorithm has at most three
items to choose from.

Fix some $n\ge 2$. To simplify notation, in this section we refer to items simply by their weight, thus below ``$z_i$" denotes
both an item and its weight. The instance consists of a sequence of $2n$ items
$1,z_1,z_2,...,z_{2n-2},z_{2n}$ (note that
there is no item indexed $2n-1$) such that
\begin{equation*}
z_2 \before z_4 \before ... \before z_{2n-2}
		\before z_{2n} \before z_{2n-3} \before ... \before
		z_3 \before z_1 \before 1, \quad \textrm{ and }
\end{equation*}
\begin{equation*}
 1 > {z_1} > {z_2} > \ldots > z_{2n-3} > z_{2n-2} > z_{2n} > 0.
\end{equation*}
The even- and odd-numbered items in this sequence form two roughly
geometric sequences. In fact,
$z_{2i}$ is only slightly smaller than $z_{2i-1}$, for all $i=1,...,n-1$.

Initially, items $z_2 \before z_1 \before 1$ are present.
In step $i=1,2,...,n-1$, the adversary maintains the invariant that
the active items are $z_{2i}\before z_{2i-1} \before ... \before 1$, of which
only three items ${z_{2i}}$, ${z_{2i-1}}$ and $1$ are pending for $\calA$
(i.e. $\calA$ already collected $z_{2i-3},...,z_1$).
The adversary's move depends now on what $\calA$ collects in this step:
\begin{description}
	
    \item{(i)}
	$\calA$ collects ${z_{2i}}$. Then the adversary ends the game by
	deleting all active items. In this case
	the adversary collects $i$ heaviest items:
	$1,{z_1},{z_2},{z_3},\ldots,z_{i-1}$.

    \item{(ii)} $\calA$ collects $1$.
	The adversary ends the game by deleting $x_{2i}$ and $x_{2i-1}$.
	This leaves $\calA$ with no pending items, and the adversary
	can now collect $\calA$'s items one by one. Overall, in this case
	the adversary collects $2i$ heaviest items:
	$1, {z_1}, {z_2}, \ldots, {z_{2i-2}}, {z_{2i-1}}$.
	
	\item{(iii)} $\calA$ collects ${z_{2i-1}}$. In this case the game
	continues. If $i < n-1$, the adversary deletes $z_{2i}$, inserts
	$z_{2i+2}$ and $z_{2i+1}$ into the current list (according to
	the order defined earlier), and we go to step $i+1$.
	The case $i=n-1$ is slightly different: here the adversary
	only inserts the last item $z_{2n}$ before proceeding to step $n$
	(described below).
	
\end{description}

If the game reaches step $n$, $\calA$ has two pending items, $z_{2n}$ and $1$.
In this step, the adversary behavior is similar to previous steps:
if $\calA$ collects $z_{2n}$, then the adversary
deletes the whole sequence and collects $n$ heaviest items:
$1,{z_1},{z_2},{z_3},\ldots,z_{n-1}$. If $\calA$ collects $1$, the adversary
deletes $z_{n}$, leaving $\calA$ without pending items, and allowing
the adversary to collect the whole sequence.

Our goal is to find a sequence $\braced{z_i}$, as above,
and a constant $\calR$ such that
\begin{eqnarray}
\calR \cdot (1 + \textstyle{ \sum_{i=1}^j z_{2i-1} })
 	&\leq& 1 + \textstyle{ \sum_{i=1}^{2j+1} z_i } \qquad \text{for all}\; 0 \leq j < n
		\label{eqn: et queue lb, odd n}
 		\\
\calR \cdot (z_{2j+2} + \textstyle{ \sum_{i=1}^j z_{2i-1} })
	& \leq & 1 +  \textstyle{ \sum_{i=1}^j z_i } \qquad \text{for all}\; 0 \leq j < n
			\label{eqn: et queue lb, even n}
\end{eqnarray}

\begin{lemma}\label{lem: det queue lb, any n}
Suppose that there is a sequence $1,z_1,...,z_{2n-2},z_{2n}$, and
a constant $\calR$ that satisfy inequalities (\ref{eqn: et queue lb, odd n})
and (\ref{eqn: et queue lb, even n}). Then
there is no $\calR$-competitive deterministic online
algorithm for dynamic queues, even in the decremental case.
\end{lemma}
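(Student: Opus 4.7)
My plan is to show that the adversary strategy described above forces any deterministic online algorithm $\calA$ to achieve at most $1/\calR$ of the optimum on the constructed instance. I would first carry out the argument for the dynamic case, in which insertions are permitted, and then reduce the decremental case to it.

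For the dynamic case, the main step is an induction on the step counter $i$, maintaining the invariant that at the start of step $i < n$ the active items form the queue $z_{2i} \before z_{2i-1} \before \cdots \before 1$ and $\calA$'s pending set is exactly $\{z_{2i}, z_{2i-1}, 1\}$ -- in particular, $\calA$ has already taken $z_{2k-1}$ for every $k < i$, since cases (i) and (ii) immediately terminate. The base case $i = 1$ holds by construction; in the inductive step under case (iii), once $\calA$ takes $z_{2i-1}$ the adversary deletes the prefix $z_{2i}$ and then inserts $z_{2i+2}$ at the head followed by $z_{2i+1}$, which by the prescribed ordering restores the invariant at index $i+1$. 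It then suffices to compare gains in each terminating case. For case (i) at step $i = j+1$, $\calA$'s cumulative gain is $z_{2j+2} + \sum_{k=1}^{j} z_{2k-1}$ while the adversary, using full knowledge of the instance, collects the $j+1$ heaviest items $1, z_1, \ldots, z_j$; the bound ``ratio at least $\calR$'' is exactly inequality~(\ref{eqn: et queue lb, even n}). For case (ii) at step $i = j+1$, $\calA$'s gain is $1 + \sum_{k=1}^{j} z_{2k-1}$ and the adversary collects the top $2j+2$ items, which is exactly inequality~(\ref{eqn: et queue lb, odd n}). The step-$n$ sub-cases reduce to the same two patterns at $j = n-1$, modulo the bookkeeping of the skipped index $2n-1$.

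For the decremental reduction, one preloads the entire sequence $1, z_1, \ldots, z_{2n-2}, z_{2n}$ into the queue at the outset; the adversary then executes the same deletion schedule as in the dynamic case (insertions are unnecessary since all items are already present). The subtlety is that $\calA$ now sees more pending items and might pick something outside $\{z_{2i}, z_{2i-1}, 1\}$. I would handle this with an exchange argument: by the strict monotonicity $1 > z_1 > \cdots > z_{2n}$, any other pending item is strictly lighter than $z_{2i-1}$, and since it lies strictly to the right of $z_{2i}$ in the queue it remains pending under every subsequent prefix-deletion up to the game's termination step. Consequently any $\calA$-strategy can be rewritten into one whose step-$i$ pick lies in $\{z_{2i}, z_{2i-1}, 1\}$ without decreasing $\calA$'s total gain, and the dynamic analysis then applies verbatim. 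The main obstacle is precisely this exchange: one has to check that, at termination, all ``deferred'' light items are flushed out by the adversary's final prefix-deletion and hence cannot contribute to $\calA$'s gain, so swapping a light pick for the canonical one does not cost $\calA$ any previously-collected items. This bookkeeping is clean but not automatic, and it is the only place where the decremental argument requires care beyond a verbatim copy of the dynamic case.
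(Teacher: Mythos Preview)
Your treatment of the dynamic case is correct and matches the paper: the invariant is right, and the identification of cases (i) and (ii) with inequalities~(\ref{eqn: et queue lb, even n}) and~(\ref{eqn: et queue lb, odd n}) is exact.

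The decremental reduction is where your proposal diverges, and the exchange argument has a real gap. First, you never specify what the adversary does when $\calA$ makes a non-canonical pick, so it is unclear against which instance you are comparing gains after a swap. Second, the claim that a non-canonical item ``remains pending under every subsequent prefix-deletion up to the game's termination step'' is false: for example $z_{2i+2}$ sits immediately to the right of $z_{2i}$, and if the game continues via case~(iii) the adversary deletes it as (part of) a prefix after step $i+1$. So your ``deferred'' light item can disappear long before termination, and the swap you describe is not well-defined. Even granting that every non-canonical item eventually gets flushed, you still have to say which canonical item you swap it for and argue that the adversary's adaptive response to the rewritten algorithm does not change the instance in a way that hurts the comparison; none of this is addressed.

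The paper sidesteps the exchange entirely with a one-line observation you almost made: every pending item outside $\{z_{2i},z_{2i-1},1\}$ has index strictly greater than $2i$ and is therefore lighter than $z_{2i}$ (not merely lighter than $z_{2i-1}$). So the adversary simply treats any such pick as case~(i): delete the whole queue immediately. The algorithm's gain is then at most $z_{2i}+\sum_{k<i}z_{2k-1}$, the adversary still realizes $1+\sum_{k<i}z_k$, and inequality~(\ref{eqn: et queue lb, even n}) with $j=i-1$ gives ratio at least $\calR$. No rewriting of $\calA$ is needed.
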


The lemma should be clear from the description of the strategy given
earlier, since the sums in inequalities  (\ref{eqn: et queue lb, odd n})
and (\ref{eqn: et queue lb, even n}) represent the gains of the
adversary and the algorithm in various steps. The only part that
needs justification is that the lemma holds in the decremental
case. To see this, we slightly modify the adversary
strategy: The sequence $\braced{z_i}$ is
created all at once in the beginning, and whenever $\calA$
deviates from the choices (i), (ii), (iii), it must be
collecting an item lighter than $z_{2i}$, and thus the
adversary can finish the game as in Case~(i).

\smallskip \noindent
Lemma~\ref{lem: det queue lb, any n} and straightforward calculations for
$n=3$ ($6$ items) yield the following.

\begin{theorem}\label{thm: det queue lower bound}
There is no deterministic online algorithm for dynamic
queues (even in the decremental case) with competitive
ratio smaller than $1.6329$.
\end{theorem}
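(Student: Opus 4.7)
The plan is to invoke Lemma~\ref{lem: det queue lb, any n} with $n=3$, so the adversary instance consists of six items $1, z_1, z_2, z_3, z_4, z_6$ (recall that the index $2n-1 = 5$ is absent in the construction). The hypothesis of the lemma then supplies six constraints: three instances of (\ref{eqn: et queue lb, odd n}) and three of (\ref{eqn: et queue lb, even n}), one per choice of $j \in \{0,1,2\}$. The goal is to exhibit values $1 > z_1 > z_2 > z_3 > z_4 > z_6 > 0$ and a number $\mathcal{R} \geq 1.6329$ satisfying all six inequalities simultaneously.

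First I would substitute $j=0,1,2$ and write the six inequalities out longhand, obtaining a concrete (bi)linear system in the five weights $z_1,z_2,z_3,z_4,z_6$ and the ratio $\mathcal{R}$. The three constraints from (\ref{eqn: et queue lb, odd n}) compare $\mathcal{R}$ times the online gain in the case the adversary ``ends the game at an odd step'' to the corresponding adversary gain, and symmetrically for (\ref{eqn: et queue lb, even n}) in the ``even step'' case.

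Next I would argue that at an optimum each of the six inequalities may be taken to hold with equality. Informally: every $z_i$ appears on both sides of some constraints, so any slack can be absorbed into a local perturbation of a single weight that enlarges $\mathcal{R}$ without violating the monotonicity. This reduces the problem to a square system of six polynomial equations in six unknowns $\bigl(\mathcal{R}, z_1, z_2, z_3, z_4, z_6\bigr)$, which I would solve by successive elimination: express $z_1$ from the $j=0$ equation of (\ref{eqn: et queue lb, odd n}), then $z_2$ from the $j=0$ equation of (\ref{eqn: et queue lb, even n}), and so on, eventually reducing everything to a single univariate polynomial in $\mathcal{R}$ whose largest feasible root is the sought bound.

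The only real obstacle is arithmetic rather than conceptual, so the final step is a (short) symbolic or numerical computation that produces a root $\mathcal{R} > 1.6329$ together with explicit positive values of $z_1,\ldots,z_6$. I would conclude by checking the strict ordering $1 > z_1 > z_2 > z_3 > z_4 > z_6 > 0$ on the resulting numerical solution; this verification is routine once the values are on the page, and rounding $\mathcal{R}$ down yields the stated $1.6329$ bound.
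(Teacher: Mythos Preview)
Your proposal is correct and follows essentially the same route as the paper: invoke Lemma~\ref{lem: det queue lb, any n} with $n=3$, write out the six constraints, tighten them to equalities, and eliminate down to a single univariate polynomial. The paper carries out the elimination in the variable $x = z_2 = 1/\calR$ rather than in $\calR$, arriving at $x^5 + x^4 + 5x^3 - x^2 - 1 = 0$ with unique real root $x \approx 0.61238$, but this is an inessential computational choice.
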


A natural question arises how much this bound can be improved
with sequences $\braced{z_i}$ of arbitrary length. For $n=5$
($10$ items) one can obtain $\calR=1.6367...$ and our
experiments indicate that the corresponding ratios
tend to $\approx 1.6378458$, so the improvement is minor. However,
it is easy to prove a lower bound of $2$ for memoryless algorithms,
i.e. algorithms that decide which item to collect
based only on the weights of their pending items.


\begin{theorem}\label{thm: det queue lb memoryless}
For dynamic queues, no memoryless algorithm has competitive
ratio smaller than $2$.
\end{theorem}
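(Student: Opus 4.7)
The plan is to exploit the defining limitation of a memoryless algorithm: its decision function $f$ depends only on the multiset of weights of the pending items, so $f$ returns the same weight on any two queue configurations that have the same multiset of weights, irrespective of how those weights are ordered along the queue. The adversary uses this blindness to position in order to trap the algorithm into collecting the item that will safely survive the next prefix deletion, leaving the vanishing prefix item entirely to the offline optimum.

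Fix an arbitrary $\eps \in (0,1)$, let $w^\star = f(\{1-\eps,\,1\}) \in \{1-\eps,\,1\}$, and let $w'$ denote the other element of this set. The adversary's (decremental, hence admissible for dynamic queues) instance consists of two items $a \before b$ with $w_a = w'$ and $w_b = w^\star$, together with a single action after step~$1$: delete the prefix $\{a\}$. On step~$1$ the algorithm's pending weights form the set $\{1-\eps,\,1\}$, so it selects the unique pending item of weight $w^\star$, which is $b$. After the deletion, $a$ is inactive and $b$ has been collected, so the algorithm has no pending item on step~$2$ and ends with total gain $w^\star$. The offline optimum, by contrast, collects $a$ on step~$1$ (its last opportunity) and $b$ on step~$2$, achieving total gain $w' + w^\star = 2 - \eps$.

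The ratio $(2-\eps)/w^\star$ thus equals $2-\eps$ when $w^\star = 1$ and exceeds $2$ when $w^\star = 1-\eps$; in either subcase it is at least $2-\eps$. Since $\eps$ was arbitrary, the competitive ratio of any memoryless algorithm is at least $2$. The only point worth a brief sanity check is feasibility of the claimed optimum schedule: $a$ and $b$ are both active on step~$1$ and $b$ remains active on step~$2$, so collecting $a$ first and $b$ second respects the active intervals (and incidentally satisfies the EEF property). There is no real obstacle in this proof; the crux is recognizing that the memoryless definition forbids $f$ from using positional information, which is exactly what enables the adversary to swap the two orderings without altering the algorithm's choice.
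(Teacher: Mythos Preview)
Your argument rests on reading ``memoryless'' as: the decision function $f$ sees only the \emph{multiset} of pending weights and is blind to their order in the queue. Under that reading your two-item construction is correct and far shorter than the paper's. But that is not the notion the paper is working with. In the dynamic-queue setting (and in the packet-scheduling literature the paper explicitly contrasts with, e.g.\ Englert--Westermann), a memoryless algorithm sees the full current state of its pending queue---the ordered sequence of weights---and merely carries no additional memory across steps. Under this standard reading your adversary fails outright: the rule ``always collect the first pending item'' is memoryless, and on either of your two orderings of $\{1-\eps,1\}$ it collects the front item in step~1 and the surviving item in step~2, achieving gain $2-\eps$ and ratio~$1$. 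So your sentence ``the memoryless definition forbids $f$ from using positional information'' is exactly the step that does not go through.

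The paper's construction is built precisely to neutralize positional information. It presents the algorithm with $n+1$ items of weights $1,\,1+\tfrac1n,\,\dots,\,2$ in increasing queue order; after observing which position $k$ the algorithm picks, the adversary collects $x_{k-1}$, deletes the prefix $x_0,\dots,x_{k-1}$, and re-inserts fresh copies so that the algorithm faces the \emph{same ordered list of pending weights} at the next step. Memorylessness (in the ordered-list sense) then forces the same choice $x_k$ every time; meanwhile the adversary accumulates all the copies of $x_k$ and collects them at the end, giving ratio $(2+(2k-1)/n)/(1+k/n)\to 2$. Your underlying instinct---exploit that a fixed configuration forces a fixed choice---is right; what is missing is a configuration that remains identical \emph{as an ordered queue} after each round, and two items cannot supply that.
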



\subsection{Upper Bound of $1.737$ for Decremental Queues}


\paragraph{Algorithm~{\DecQueEFH}:}
The computation is divided into stages, where each stage is a single
step, a pair of consecutive steps, or a triple of consecutive steps.
By $h$ we denote the maximum-weight pending item from the first
step of the stage. We use two parameters,
$\beta = (\sqrt{13}+1)/8$ and $\xi = (\sqrt{13}+1)/6$. Note that
$\beta < \xi$. Without
loss of generality, we assume that there are always pending
items, for we can always insert any number of
$0$-weight items into the instance, without changing the
competitive ratio. In the pseudo-code below, we assume that
after each item collection the algorithm proceeds to the
next step of the process.

\begin{center}
\begin{minipage}{3in}
\begin{tabbing}
\hspace{0.3in} \=\hspace{0.3in} \=\hspace{0.3in}\= \kill
(E) \> let $h$ be the heaviest pending item \\
	\> collect the earliest pending item $e$ with $w_e \ge \beta w_h$ \\
(F)	\> {\myif} $h$ is not pending {\mythen} end stage and goto (E) \\
	\> collect the earliest item $f$ with $w_f \ge \xi w_h$ \\
(H) \> {\myif} $h$ is not pending {\mythen} end stage and goto (E)\\
	\> collect $h$\\
	\> end stage and goto (E)
\end{tabbing}
\end{minipage}
\end{center}
%


\begin{theorem}\label{thm: static queue 1.73 algorithm}
For decremental queues, the competitive ratio of
{\DecQueEFH} is at most $\calR = 2(\sqrt{13}-1)/3\approx 1.737$.
\end{theorem}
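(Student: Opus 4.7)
The plan is to give a stage-by-stage amortized analysis against an optimal adversary $\opt$, which we may assume satisfies the EEF property. Fix a stage with heaviest pending item $h$; because the instance is decremental, every item of weight exceeding $w_h$ was already collected by $\calA$ in some earlier stage (such items were active but not pending for $\calA$ at the start of the current stage). I would introduce a nonnegative potential $\Phi$ that tracks this past credit---informally, a scaled total weight of items already collected by $\calA$ but not yet by $\opt$---and prove a local amortized inequality of the form
\begin{equation*}
\calR \cdot (\calA\text{'s gain in the stage}) + \Delta \Phi \;\geq\; (\opt\text{'s gain in the stage})
\end{equation*}
for each stage. Summing over stages and using $\Phi \geq 0$ at the end then yields the claim.

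The case analysis proceeds by stage length $k \in \{1,2,3\}$. For $k=3$, {\DecQueEFH} collects items of total weight at least $(\beta + \xi + 1)\,w_h$, while $\opt$ collects at most three items. The structural constraints---items strictly before $e$ in the queue have weight $< \beta w_h$, items strictly between $e$ and $f$ have weight $< \xi w_h$, and the EEF property restricts which items $\opt$ can touch in which step---limit $\opt$'s gain to at most $3 w_h$ plus potential credit. For $k=2$, the stage ends either because $f = h$ (so $\calA$ gains $(\beta+1)w_h$) or because the adversary prefix-deleted $h$ between steps (F) and (H) (so $\calA$ gains at least $(\beta+\xi)w_h$), and $\opt$ collects at most two items in these steps. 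For $k=1$, $\calA$ gains at least $\beta w_h$ and $\opt$ at most one item of weight $\leq w_h$ (heavier items being paid for by $\Delta \Phi$); this is the binding case and yields the condition $\calR \geq 1/\beta$.

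The parameters $\beta = (\sqrt{13}+1)/8$ and $\xi = (\sqrt{13}+1)/6$ are chosen to equate the local inequalities that turn out to be tight across the case analysis; a direct computation gives $1/\beta = 8/(\sqrt{13}+1) = 2(\sqrt{13}-1)/3 = \calR$, matching the one-step constraint exactly, and the value of $\xi$ balances the two- and three-step inequalities so that they are not tighter than the one-step constraint even in the worst item-position layouts consistent with EEF.

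The main obstacle I anticipate is defining $\Phi$ so that the bookkeeping is consistent across all three stage-length cases, especially when $\opt$ in the current stage collects an item that $\calA$ collected in a past stage as its heaviest pending item; such a collection draws from $\Phi$ in a way that has to be paid for by the algorithm's gain in that earlier stage, not in the current one, so $\Phi$ must be scaled appropriately to absorb the $\calR$ factor. A secondary difficulty is the careful use of the EEF property to group $\opt$'s items by queue position against $e$, $f$, and $h$ so that the three-step and two-step inequalities remain nonbinding---this is what justifies the asymmetric choice of thresholds $\beta < \xi$ and ties the final ratio $\calR = 2(\sqrt{13}-1)/3$ to the one-step case alone.
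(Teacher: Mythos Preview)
Your approach is essentially the paper's: an amortized (credit) argument against an EEF adversary, with case analysis by stage length and by the position of the adversary's collected items relative to $e$, $f$, $h$. In the paper the potential is precisely the total (unscaled) weight of items already collected by {\DecQueEFH} but still pending for the adversary, and the local inequality is stated as ``adversary's amortized gain $\le \calR\cdot$ algorithm's gain.''

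Two corrections are worth making before you flesh this out.

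\emph{Sign of the amortized inequality.} With $\Phi$ defined as you describe (credit for items collected by $\calA$ but not yet by $\opt$), the deposit into $\Phi$ must be \emph{paid for} by $\calA$'s gain, so the correct local inequality is
\[
\calR\cdot(\text{$\calA$'s gain}) \;\ge\; (\text{$\opt$'s gain}) + \Delta\Phi,
\]
equivalently $\opt + \Delta\Phi \le \calR\cdot\calA$. Summing gives $\calR\cdot\calA_{\mathrm{tot}} \ge \opt_{\mathrm{tot}} + \Phi_{\mathrm{end}}$, and then $\Phi_{\mathrm{end}}\ge 0$ yields the theorem. As you wrote it, the summation produces $\calR\cdot\calA_{\mathrm{tot}} + \Phi_{\mathrm{end}} \ge \opt_{\mathrm{tot}}$ with $\Phi_{\mathrm{end}}\ge 0$ on the wrong side, so the telescoping does not close.

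\emph{The three-step case is also binding.} Your last paragraph suggests that only the $k=1$ case is tight and that $\xi$ merely provides slack in $k=2,3$. In fact, in the paper's analysis two sub-cases of $k=3$---when all three adversary items lie before $e$, and when its latest item lies strictly between $e$ and $f$---give amortized ratios
\[
\frac{4\beta+\xi+1}{\beta+\xi+1}
\qquad\text{and}\qquad
\frac{4\xi+1}{\beta+\xi+1},
\]
both of which equal $\calR$ exactly. Together with $\calR=1/\beta$ from the $k=1$ case, these three equalities force $\xi=4\beta/3$ and $16\beta^2-4\beta-3=0$, which is where the stated values of $\beta,\xi,\calR$ come from. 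So when you carry out $k=3$ you will need the full sub-case split on the position of the adversary's last item relative to $e$, $f$, $h$ and the credit for each of $e,f,h$ that remains pending for the adversary; there is no slack there to spare.
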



\begin{proof}
We fix an instance and we compare {\DecQueEFH}'s gain on this instance
to the adversary's gain. We assume (w.l.o.g.) that	
the adversary has the EEF property.

The proof is by amortized analysis. We preserve
the invariant that, after each stage,
each item $i$ that is pending for the adversary
but has already been collected by {\DecQueEFH} has credit associated
with it of value equal $w_i$. The adversary's amortized gain is
equal to his actual gain plus the total credit change.
To prove the theorem, it is then sufficient to prove the following claim:

\smallskip\noindent
$(\ast)$ In each stage the adversary's amortized
gain is at most $\calR$ times {\DecQueEFH}'s gain.

\smallskip
To prove $(\ast)$, we consider several cases depending on the number of steps
in a stage and on relative location of items collected by the adversary
and by {\DecQueEFH}. We assume that at each step the adversary collects items
that were not collected by the algorithm before or during this stage.
Otherwise, either the adversary collects an item that has been collected by {\DecQueEFH} earlier and has credit on it, or {\DecQueEFH} collected this
item in this stage, in which case we can think of the algorithm giving
the adversary credit for this item anyway (even though it is not needed,
for this item is not pending for the adversary anymore).

We first observe that $e\beforeeq f\beforeeq h$ (if $f$ is defined for
this stage). This follows immediately from $\beta < \xi < 1$. An important
consequence of this, that plays a major role in the argument below,
is that when $h$ is deleted, then $e$ and $f$ are deleted as well.


\smallskip
\noindent
\mycase{1} $h$ is not pending in (F).
The stage has only one step, and the algorithm
collects an item $e$ with $w_e\ge \beta w_h$. Let
$a$ be the item collected
by the adversary.

If $h$ was deleted after (E) then $e$ was deleted as well.
Thus we do not give the adversary credit for $e$, and
his amortized gain is $w_a\le w_h$. The ratio is
\begin{eqnarray*}
\frac{w_a}{w_e} &\le& \frac{1}{\beta}\;=\; \calR.
\end{eqnarray*}
Suppose that $h$ was collected (i.e. $e=h$). If $a \before h$
then $w_a\le\beta w_h$, so, together with the credit for $h$,
the amortized adversary's gain is $w_a+w_h$, and
the ratio is
\begin{eqnarray*}
	 \frac{w_a + w_h}{w_h}
			&\le& 1+\beta
			\;\le\; \calR.
\end{eqnarray*}
If $h\before a$, we need not give the
adversary any credit, so the ratio is at most $1$.

\smallskip
\noindent
\mycase{2} $h$ is pending in (F) but is not pending in (H).
The stage has two steps, and we collect $e$ and $f$, gaining
$w_e + w_f \ge (\beta+\xi)w_h$.
The adversary collects two items, say $a$ and $b$ with $a\before b$.

If $h$ was deleted after (F), then both $e$ and $f$ are deleted as well,
so we do not give the adversary any credits. Thus the adversary
amortized gain is $w_a+w_b\le 2 w_h$. In this case the ratio is
\begin{eqnarray*}
\frac{w_a + w_b}{w_e+w_f}
	&\le&  \frac{2}{\beta +\xi}
	\;=\; 4(\sqrt{13}-1)/7
				\;<\; \calR.
\end{eqnarray*}
Suppose now that
the algorithm collected $h$ in (F) (i.e. $f=h$), and thus our
gain is actually $w_e+w_h \ge (1+\beta)w_h$. If $b\before e$,
then the adversary amortized gain is
$w_a + w_b + w_e + w_h \le (2\beta+1) w_h + w_e$. Thus the ratio is
\begin{eqnarray*}
\frac{w_a + w_b + w_e + w_h}{w_e+w_h}
	&\le&  \frac{3\beta+1}{\beta +1}
	\;=\; 4(\sqrt{13}+10)/17
				\;<\; \calR.
\end{eqnarray*}
If $e\before b\before h$, the adversary's amortized gain is
$w_a + w_b + w_h \le (2\xi +1)w_h$, so the ratio is
\begin{eqnarray*}
\frac{w_a + w_b + w_h}{w_e+w_h}
	&\le&  \frac{2\xi+1}{\beta +1}
	\;=\; 2(5\sqrt{13}+23)/51
				\;<\; \calR.
\end{eqnarray*}
If $h\before b$, we need not give the adversary any
credit, gaining $w_a + w_b \le 2w_h$.
So the ratio is less than the one above because $2\xi + 1 > 2$.

\smallskip
\noindent
\mycase{3}  $h$ is still pending in (H). In this case the stage
has three steps and we collect $e$, $f$, and $h$,
for the total gain of $w_e + w_f + w_h \ge (\beta+\xi+1)w_h$.
The adversary collects three items $a\before b \before c$ and
may get credit for some items.

If $c\before e$, then the adversary gets credit for $e$, $f$ and
$h$, and his amortized gain is $w_a + w_b + w_c + w_e + w_f + w_h
	\le 3 \beta w_h + w_e + w_f + w_h$.
So the ratio is
\begin{eqnarray*}
\frac{w_a+w_b+w_c+w_e+w_f+w_h}{w_e + w_f+w_h}
			&\le& \frac{4\beta+\xi+1}{\beta+\xi+1}
			\;=\; \calR.
\end{eqnarray*}
If $e\before c \before f$, then the adversary
 gets credit for $f$, $h$, and his amortized gain is
$w_a + w_b + w_c +w_f + w_h \le 3\xi w_h + w_f + w_h$. So the ratio is
\begin{eqnarray*}
\frac{w_a+w_b+w_c+w_f+w_h}{w_e + w_f+w_h}
			&\le& \frac{4\xi+1}{\beta+\xi+1}
			\;=\; \calR.
\end{eqnarray*}
If $f\before c\before h$, then he only
gets credit for $h$, so his amortized gain is
$w_a + w_b + w_c + w_h \le 4 w_h$. The ratio is
\begin{eqnarray*}
\frac{w_a+w_b+w_c+w_h}{w_e + w_f+w_h}
		 &\le& \frac{4}{\beta+\xi+1}
			\;=\; 8(31 - 7\sqrt{13})/27
		\;<\; \calR.
\end{eqnarray*}
Finally, when $h\before c$, the adversary amortized
gain is $w_a + w_b + w_c \le 3 w_h$, which is less
than in the previous case.
\qed
\end{proof}

We attempted to extend the idea of Algorithm~{\DecQueEFH} to stages
with more steps, but according to numerical experiments we conducted,
this does not improve the competitive ratio.


\subsection{Upper Bound of $1.8$ for FIFO Queues}

We now extend the idea of the previous algorithm
to FIFO queues.
The algorithm uses two parameters, $\alpha$ and $\beta$.
The main idea is this: If the heaviest item $h'$ from the previous
step is no
longer pending, this is fine and it simply begins another step. If
$h'$ is still pending and new items have been inserted to the queue, the
algorithm inspects them. If the heaviest new item $h$ is not too heavy
(i.e., if $\alpha w_{h} \leq w_{h'}$), the algorithm ignores new items
and collects $h'$. If $h$ is very
heavy ($\alpha w_{h} > w_{h'}$), the algorithm forgets about $h'$, resets
\emph{heavy} to $h$ and collects the earliest pending item $e$, such that
$w_e \geq \beta w_{h}$.

\paragraph{Algorithm~{\FIFOQueEH}:}
By $h$ we denote the maximum-weight pending item and by $h'$ the previous
maximum-weight pending item (initially $h'$ is an imaginary item of
weight $0$). We use two parameters, $0 < \alpha , \beta  < 1 $. Without loss
of generality, we assume that there are always pending items, for we can
always insert any number of $0$-weight items into the instance,
without changing the competitive ratio. In the pseudo-code below, we assume
that after each item collection the algorithm proceeds to the next step of the
process.

\begin{center}
\begin{minipage}{3in}
\begin{tabbing}
\hspace{0.3in} \=\hspace{0.3in} \=\hspace{0.3in}\= \kill
  \> let $h$ be the heaviest pending item and $h'$ the previous heaviest item \\
(E)	\>{\myif} ($h'$ is not pending) or ($h'$ is pending and $\alpha w_h \geq w_{h'}$) {\mythen} \\
	\> collect the earliest item $e$ with $w_e \geq \beta w_h$ \\
(H) \> {\myelse} collect $h'$
\end{tabbing}
\end{minipage}
\end{center}

\begin{theorem}\label{thm: fifo det 1.8 algorithm}
The competitive ratio of {\FIFOQueEH} with $\alpha= \frac{3}{4}$ and $\beta=\frac{2}{3}$
is at most $1.8$.
\end{theorem}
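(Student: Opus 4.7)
The plan is to mirror the amortized analysis used in the proof of Theorem~\ref{thm: static queue 1.73 algorithm}, adapted to the fact that a ``stage'' of {\FIFOQueEH} is always a single step but state is carried across steps via the variable $h'$. We assume the adversary follows the EEF property and maintain the invariant that each item $i$ which is pending for the adversary but already collected by {\FIFOQueEH} carries a credit of $w_i$. Then the adversary's amortized gain in a step equals his actual gain plus the change in total credit, and it suffices to show that, at every step, the amortized gain is at most $9/5$ times the algorithm's gain. As in the earlier proof we may assume the adversary collects only items not yet collected by the algorithm.

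In branch (E), the algorithm gains $w_e \geq \beta w_h = (2/3) w_h$, and either $h'$ is absent or $w_{h'} \leq \alpha w_h = (3/4) w_h$. I would case-split on the position of the adversary's item $a$ relative to $e$ and $h$. When $h \beforeeq a$ no new credit is owed and the ratio is at most $1/\beta = 3/2 \le 9/5$. When $e \beforeeq a \before h$ the adversary may later target $h$, but the bound $(w_a + w_h)/w_e \le 2/\beta = 3$ must be tightened by using that $h$ itself will be collected soon (in branch (H) of a subsequent step). When $a \before e$, we have $w_a < \beta w_h$ and, because insertions are at the tail, every algorithm-collected item before $e$ was itself chosen in an earlier branch-(E) step against a smaller-or-equal heaviest item, which lets us bound the aggregated credits.

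In branch (H) the algorithm gains $w_{h'} > \alpha w_h = (3/4) w_h$. The key structural observation is that in the previous step the algorithm did not collect $h'$, so it executed branch (E) against a heaviest pending item that was exactly $h'$; hence in that step it collected some $e^{\mathrm{prev}}$ with $w_{e^{\mathrm{prev}}} \geq \beta w_{h'} = (2/3) w_{h'}$. Pairing the current step with this previous one, the algorithm has gained at least $(1+\beta) w_{h'}$ while the adversary has collected only two items, each of weight at most $w_h < w_{h'}/\alpha = (4/3) w_{h'}$, plus possible credits. Splitting again on the position of $a$ relative to $h'$ and the newly inserted items, the worst subcase should yield exactly $9/5$ with the chosen $\alpha$ and $\beta$. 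When branch (H) fires many times in a row, the pairing with the preceding branch (E) must be amortized across the whole run, so the credits must be declared carefully.

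The principal obstacle is the bookkeeping of credits across step boundaries, especially during long runs of branch (H) steps and at the transition from a branch-(H) run to a branch-(E) step that resets $h$. The cleanest remedy I envisage is to augment the credit scheme with a potential that tracks a ``latent'' contribution of $\beta w_{h'}$ for the currently held $h'$, so that consecutive (H) steps telescope and the jump into the next (E) step pays for any remaining imbalance. Once the case analysis is organized this way, the verification should reduce to a small collection of elementary inequalities in $\alpha$ and $\beta$, each of which is saturated at $9/5$ by the choice $\alpha = 3/4,\ \beta = 2/3$.
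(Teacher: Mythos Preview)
Your proposal is a sketch rather than a proof, and the step-by-step decomposition you adopt does not close. In an (E) step where the adversary's item $a$ satisfies $a\before e$, the credit invariant forces you to deposit $w_e$ on $e$ (it is now alg-collected and still adversary-pending), so the amortized adversary gain is $w_a+w_e$, against an algorithm gain of $w_e$; with $w_a$ close to $\beta w_h$ and $w_e=\beta w_h$ this ratio is~$2$, not~$9/5$. Your own case $e\beforeeq a\before h$ has the same defect (the ``$2/\beta$'' you wrote is actually $1+1/\beta=5/2$ once the credit is accounted for correctly; either way it exceeds $1.8$). A single-step amortization with only the plain credit invariant cannot succeed here, and the extra potential you gesture at (``latent $\beta w_{h'}$'') is not specified enough to check.

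The paper's proof uses a genuinely different decomposition. A \emph{stage} is a maximal run of consecutive (E) steps, say with collected items $e_1,\dots,e_k$ and corresponding heaviest items $h_1,\dots,h_k$, terminated either by a deletion of $h_k$ or by a single (H) step in which $h_k$ is collected. The crucial structural fact is that within a stage the heavy items grow at least geometrically, $h_i\le \alpha\, h_{i+1}$, so $h_i\le\alpha^{k-i}h_k$. The credit invariant is then applied \emph{per stage}, and the case analysis is on the position of the adversary's last item in the stage relative to $e_1,\dots,e_k,h_k$. The worst case (the paper's Case~2a with $m=k-l+1=1$) yields exactly $1+2\beta/(1+\beta)=1.8$; for longer tails the geometric bound on $\sum h_i$ makes the ratio strictly smaller. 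Your (E)--(H) pairing is precisely the $k=1$ instance of this stage analysis, but you are missing the treatment of $k\ge 2$, which is where the geometric growth of the $h_i$ and the parameter $\alpha$ actually enter.

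Finally, your structural claim that ``in the previous step the algorithm executed branch~(E)'' before every (H) step is false: if during an (H) step a new heaviest item $h$ arrives with $w_{h'}<w_h<w_{h'}/\alpha$, then after collecting $h'$ the new $h'$ is this $h$, still pending, and (H) can fire again. So the difficulty is not only long (H) runs; it is that the (E)/(H) pattern can be arbitrary, and the clean way to tame it is the paper's stage grouping rather than step-level pairing.
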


By inspecting the decremental queue instance
$a \before b \before c \before d$, with weights
$w_a=w_b=\beta - \epsilon$, $w_c=\beta$, $w_d=1$, and weights
$w_a\beta$, $w_b=w_c=1 - \epsilon$, $w_d=1$, one can
observe the following.

\begin{theorem}\label{thm: tight analysis of FIFOQueEH}
The competitive ratio of Algorithm~{\FIFOQueEH} is at least $1.8$,
regardless of the choice of $\alpha$ and~$\beta$.
\end{theorem}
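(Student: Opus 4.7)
The plan is to exhibit, for every choice of $\alpha,\beta\in(0,1)$, a decremental queue instance on which {\FIFOQueEH} attains a ratio of at least $1.8 - O(\epsilon)$. Two instances suffice, both on four items $a\before b\before c\before d$. A key observation I would establish up front is that $\alpha$ plays no role in either instance: in the critical second step, the previous heaviest item is still present and equal to the current heaviest, so the test $\alpha w_h\ge w_{h'}$ reduces to $\alpha\ge 1$ and necessarily fails, forcing case~(H). Only $\beta$ then determines the outcome, and the two instances between them cover the full range of $\beta$.

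In the first instance I would set $w_a=w_b=\beta-\epsilon$, $w_c=\beta$, $w_d=1$, and have the input delete the prefix $\{a,b\}$ immediately after step~2. {\FIFOQueEH} is then forced to take $c$ in step~1 (the earliest item meeting the threshold $\beta w_h=\beta$) and $d$ in step~2 (by case~(H)), for gain $\beta+1$. The optimum instead collects $a,b$ in steps~1,~2 before the deletion and $c,d$ in steps~3,~4, for gain $3\beta+1-2\epsilon$. Letting $\epsilon\to 0$ yields ratio $(3\beta+1)/(\beta+1)$, which is at least $1.8$ iff $\beta\ge 2/3$.

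The second instance uses weights $w_a=\beta$, $w_b=w_c=1-\epsilon$, $w_d=1$, and deletes the prefix $\{a,b,c\}$ after step~2. The algorithm takes $a$ in step~1 (it meets the threshold with equality) and $d$ in step~2 by case~(H), again for gain $\beta+1$. The optimum saves $b$ and $c$ by collecting them in the first two steps and then takes $d$ in step~3, for gain $3-2\epsilon$. The resulting ratio $3/(\beta+1)$ is at least $1.8$ iff $\beta\le 2/3$, so the two instances complement each other and both hit exactly $9/5$ at $\beta=2/3$.

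The main obstacle, and essentially the only non-routine step, is verifying that step~2 lands in case~(H) regardless of $\alpha$; this hinges on the identity $h=h'=d$ making the (E)-test demand $\alpha\ge 1$. The rest is routine: checking that the prescribed prefix deletions are legal in the decremental queue model, that the offline schedules are indeed optimal (and consistent with the \emph{EEF} property), and comparing $(3\beta+1)/(\beta+1)$ with $3/(\beta+1)$ to confirm that the better of the two bounds is always at least $9/5$.
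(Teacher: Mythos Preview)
Your argument is correct and follows the same route as the paper: the same two four-item decremental instances with weights $(\beta-\epsilon,\beta-\epsilon,\beta,1)$ and $(\beta,1-\epsilon,1-\epsilon,1)$, yielding ratios $(3\beta+1)/(\beta+1)$ and $3/(\beta+1)$ whose maximum is at least $9/5$. The only cosmetic difference is that in the first instance you delete $\{a,b\}$ together after step~2, whereas the paper deletes $a$ after step~1 and $b$ after step~2; either schedule works. Your explicit justification that case~(H) fires in step~2 because $h=h'=d$ forces $\alpha\ge 1$ is a useful addition that the paper leaves implicit.
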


\subsection{Upper Bound $\phi$ for Non-Decreasing Weights}

In this section we give an online algorithm {\MarkAndPick}
that is $\phi$-competitive for dynamic queues when item
weights are increasing. More precisely, if $\before$
denotes the ordering of the items in the queue, then we
assume that, at any time, for any two active items $a,b\in \calS$,
if $a\before b$ then $w_a\le w_b$.


\paragraph{Algorithm~{\MarkAndPick}:}
\begin{center}
\begin{minipage}{2in}
\begin{tabbing}
\hspace{0.2in} \=\hspace{0.2in} \=\hspace{0.2in}\= \kill
  \>At each step $t=1,2,...$ \\
	\>{\myif} there is no pending item {\mythen} wait, {\myelse} \\
	\> let $h$ be the heaviest unmarked item (not necessarily active) \\
	\> mark $h$, collect the earliest pending item $i$ with $w_i \ge w_{h}/\phi$ \\
\end{tabbing}
\end{minipage}
\end{center}



\emph{Basic idea:}
The proof is based on a charging scheme, where the items collected
by the adversary are charged to our items in such a way that each
our item is charged at most $\phi$ times its weight. In other words,
each our item $i$ has a budget $\phi w_i$ and it uses its budget
to pay for some items in the adversary's set. In fact, each such $i$
pays for either one or two adversary's items.
The charging is done in two steps: (1)~we charge the adversary items to the marked items, and (2)~we charge the marked items to the algorithm's items.

When we mark an item $h$, we collect an item $i$ with $w_i\ge w_h/\phi$,
so its budget $\phi w_i$ is sufficient to pay for $h$. In a simple
scenario, if we collect items in all steps, we can afford to pay for all
items marked in step (2). In step (1), we also show that
the weight of the marked items exceeds that of the adversary,
and these two facts easily imply $\phi$-competitiveness.

In reality, the situation is more complicated---in some
moves {\MarkAndPick}
does not have any items to collect. For instance,
suppose that the adversary collects an item $j'$ in such a step.
This item is already collected by the algorithm and
is now marked. Its budget is $\phi w_{j'}$, and it pays
for its mark, as well as for another item collected by
the adversary in the past. Roughly, this is the
item collected by the adversary when the algorithm was marking $j'$.

The principle is, that any idle step
is a consequence of a step in which the algorithm marked
$h=j'$, collected an item $i$, and the
adversary collected an item $j$ smaller than $i$ and
still pending for the algorithm. (This is not exactly
correct, but it reflects the main principle. It may happen
that the marked item $h$ "responsible" for the idle
step in which the adversary collects $j'$ is actually
different from $j'$, but later in the process $h$
"transfers" this responsibility to $j'$.)
In that case, $w_j\le w_{j'}/\phi$. We refer to such
$j$'s as "extra" items (even though those are not
exactly the extra moves, but they cause extra moves later).
This idea can be formalised and proved rigorously.

\begin{theorem}\label{thm: det queue incr weight}
Algorithm~{\MarkAndPick} is $\phi$-competitive for
dynamic queues if item weights are non-decreasing.
\end{theorem}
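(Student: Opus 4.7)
The plan is to give a rigorous implementation of the two-level charging scheme outlined in the paper's \emph{basic idea} paragraph. The goal is to charge every adversary-collected item $a$ to some algorithm-collected item $i$ so that the total charge placed on $i$ is bounded by $\phi w_i$; the charging is routed through the marks, and the algorithm's rule $w_{i_t}\ge w_{h_t}/\phi$ supplies the slack that powers it.

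Concretely, I would first handle the idealised scenario in which MarkAndPick never idles and every adversary item $a_t$ collected at step $t$ is both unmarked at step $t$ and lies at or after $i_t$ in the queue. In that case $w_{a_t}\le w_{h_t}\le\phi w_{i_t}$ (the first inequality because $h_t$ is the heaviest unmarked item; the second by the algorithm's rule), so each $a_t$ is charged to $i_t$ for weight at most $\phi w_{i_t}$. The two genuine complications are: (i) the adversary may collect an item $a_t$ strictly before $i_t$ in the queue, in which case non-decreasing weights give $w_{a_t}<w_{h_t}/\phi$ (because $i_t$ was by definition the earliest pending item of weight at least $w_{h_t}/\phi$), tagging $a_t$ as an ``extra"; and (ii) MarkAndPick may be idle at some step, during which the adversary collects an item that must already be in the algorithm's collected set. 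Using the identity $1+1/\phi=\phi$, each algorithm item $i_t$ with $i_t=h_t$ admits a normal charge of weight $\le w_{h_t}$ plus one extra charge of weight $\le w_{h_t}/\phi$ and still stays within its budget $\phi w_{i_t}$.

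The principal obstacle is the careful routing of extras and idle-step adversary items, especially through insertions and through steps where $i_t\ne h_t$ (which contribute less slack than $i_t=h_t$ steps). My plan is to formalise the paper's informal principle: each idle step is causally tied to an earlier skipped adversary item, and the two can be accounted together against the algorithm-collected copy of the idle-step item, which, being both in the algorithm's collected set and in the mark set, carries the full budget $\phi w_{j'}$. The key technical steps are: (a) defining a well-founded injection from idle steps to earlier skipped extras that is stable under insertions and compatible with the EEF property of the adversary; (b) a re-routing rule for the case when $a_t$ equals a previously-marked item $h_s$, in which $a_t$ is charged to itself in $M$ and whatever previously occupied $h_s$'s slot is displaced onto $h_t$, using $w_{h_t}\le w_{h_s}$ whenever the marks were produced in non-increasing weight order and falling back on the queue-position bound otherwise; and (c) verifying, by induction on the step index, that all inductively-maintained slot assignments stay within per-mark capacities. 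The non-decreasing-weights hypothesis is used throughout to convert queue-order comparisons into weight comparisons, and is what guarantees the key bound $w_{a_t}<w_{h_t}/\phi$ for items lying strictly before $i_t$; I expect that dropping it would break this bridge between positional and weight information and invalidate the scheme.
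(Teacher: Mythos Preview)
Your proposal follows the paper's two-level charging architecture and correctly identifies the two genuine difficulties (extras and idle steps). However, what you have written is a plan, not a proof, and the parts you label (a), (b), (c) are exactly where all the work lies; none of them is actually carried out.

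More importantly, the paper's own proof does \emph{not} route charges item-by-item as you envision. It maintains four evolving sets $E_t$, $E'_t$, $M'_t$, $L'_t$ and four invariants, two of which are expressed as \emph{set-dominance} relations $X\succeq Y$ (an injection $f:Y\to X$ with $f(y)\ge y$), not as scalar weight inequalities. The paper states explicitly that ``it does not seem possible to maintain appropriate bounds only between the total weights of these sets,'' because items are added to and removed from them throughout the process; dominance is what survives these updates. Your proposal omits this machinery entirely, and your sketch gives no indication of how to replace it.

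A concrete gap: in your step (b) you rely on $w_{h_t}\le w_{h_s}$ ``whenever the marks were produced in non-increasing weight order.'' But in a dynamic queue new heavy items may be inserted after step $s$, so the mark sequence $w_{h_1},w_{h_2},\dots$ is \emph{not} monotone, and your unspecified ``queue-position bound'' fallback is not obviously available---the displaced charge need not be position-comparable to $h_t$. Likewise, in step (a) the injection from idle steps to earlier extras is the heart of the matter; the paper devotes a lengthy case analysis (four main cases, each with nontrivial updates to $M'$, $E'$, $L'$ and a verification via a dominance-update lemma) precisely to building and maintaining this correspondence. Asserting that such an injection exists and is ``stable under insertions'' is not a substitute for constructing it.
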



\section{Dynamic Queues---Randomized Algorithms}



In this section, we consider randomized algorithms for dynamic queues.
Chin~{\etal}~\cite{CCFJST06} designed a randomized
algorithm {\Rmix} for bounded-delay packet scheduling.
This algorithm is memoryless and achieves competitive
ratio $\e/(\e-1)$ against an adaptive
adversary. The competitiveness proof for {\Rmix}
applies, with virtually no changes, to dynamic queues.
In this section we show that this bound is tight.


\begin{theorem}\label{thm: dynamic queue rand m-less lower bound}
For dynamic queues, any memoryless randomized algorithm has a competitive
ratio at least $\frac{\e}{\e-1}$ against an adaptive-online adversary.
\end{theorem}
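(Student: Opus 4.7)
The plan is to apply Yao's minimax principle through an adaptive-online adversary that exploits the memoryless property: when the algorithm is presented with the same queue configuration repeatedly, its distribution over next-item choices is the same in every round, so we can reduce the analysis to a stationary single-round game.

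Fix parameters $\alpha>1$ slightly above $1$ and a large integer $n$, and consider an adversary that keeps, just before every move of the algorithm, a canonical queue of $n$ items with weights $\alpha^{0},\alpha^{1},\ldots,\alpha^{n-1}$ (or a uniformly scaled copy of it) ordered by $\before$. Let $\mu=(\mu_1,\ldots,\mu_n)$ denote the algorithm's memoryless distribution on such a queue, where $\mu_k$ is the probability of selecting the $k$-th item from the front. The adversary re-establishes the canonical configuration after each round by prefix-deleting the used-up head of the queue and appending fresh items at the tail. Crucially, the adaptive-online adversary observes the realized outcome of the algorithm's coin flip and chooses its own item to collect, its prefix-deletion, and its tail insertions in response.

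The adversary's strategy is tuned so that, regardless of the algorithm's $\mu$, one of two bad outcomes is unavoidable: either the algorithm tends to pick light items (so the per-round gain $\sum_k \mu_k \alpha^{k-1}$ is small compared to the adversary's per-round gain $\alpha^{n-1}$), or it tends to pick heavy ones (so the adversary shortens the game and captures many accumulated items in a final sweep-up phase). A balancing computation---summing a geometric series whose rate is $1-\mu_n$, the probability that the algorithm declines the current heaviest item---gives a lower bound on the competitive ratio of the form $\bigl(1-(1-\mu_n)^{n}\bigr)^{-1}+o(1)$ as $\alpha \downarrow 1$. Optimizing over $\mu_n\in(0,1]$ and taking $n\to\infty$ with $n\mu_n$ held at the minimizer drives this expression to $\e/(\e-1)$.

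The main obstacle is realizing the adversary's ``re-tile'' step legally within the dynamic queue model, where only prefixes can be deleted: after the algorithm picks some $a_k$ and the adversary collects its own item, the adversary erases the head of the queue via a single prefix-deletion and then appends a fresh geometric tail by arbitrary insertions. Secondary technical points are controlling the sweep-up phase so that it contributes at most a lower-order term to the algorithm's gain, and verifying that adaptive-online power---seeing realized but not future random bits---is exactly what lets the adversary exploit $\mu$ round-by-round without any concentration argument. It is also worth checking that assuming $\mu$ is the same in every round is legitimate, either by restricting to scale-invariant memoryless algorithms or by having the adversary present literally the same weights each round through its deletions and insertions.
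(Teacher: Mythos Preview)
Your setup---geometric weights $a^0,\ldots,a^n$, exploiting memorylessness to fix a single distribution $\mu$, and repeating the configuration for $T\gg n$ rounds---matches the paper's construction. But the actual analysis you sketch does not go through, and the formula you state is not the right one.

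You claim the competitive ratio is bounded below by $\bigl(1-(1-\mu_n)^n\bigr)^{-1}+o(1)$, to be optimized over the algorithm's choice of $\mu_n$. But this expression is monotone decreasing in $\mu_n$: at $\mu_n=1$ it equals $1$, so the algorithm simply sets $\mu_n=1$ and defeats the bound. There is no interior ``minimizer'' with $n\mu_n$ fixed at a finite constant. The formula you wrote is the one that arises in the \emph{uniform decremental} analysis (Theorems~\ref{thm:sets-uniform-lower}--\ref{thm:uni-rand}), where $E_{a,a}\approx a(1-(1-1/a)^a)$ and the $1/a$ is forced by symmetry, not chosen by the algorithm. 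In the dynamic-queue setting nothing pins $\mu_n$ to $1/n$, and the ratio depends on the \emph{entire} distribution $(\mu_k)_k$, not just on $\mu_n$. Your ``balancing computation'' is never carried out, and what you assert as its conclusion is incorrect.

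What the paper does instead is consider $n{+}1$ deterministic adversary strategies, indexed by $k=0,\ldots,n$: strategy $k$ collects $a^k$ every step, prefix-deletes $a^0,\ldots,a^k$, and reissues them (plus a fresh copy of whatever $a^j$, $j>k$, the algorithm just took). The per-step ratio under strategy $k$ is $\bigl(a^k+\sum_{j>k}q_j a^j\bigr)/\sum_j q_j a^j$, which depends on all the $q_j$. The paper then exhibits explicit weights $v_k\ge 0$ with $\sum_k v_k=1$ such that the $v$-weighted average of these ratios equals $a^{n+1}/\bigl(a^{n+1}-n(a-1)\bigr)$, independent of $(q_j)$; hence the maximum over $k$ is at least this value. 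Setting $a=1+1/n$ and letting $n\to\infty$ gives $\e/(\e-1)$. This convex-combination step is the heart of the argument and is precisely what your sketch is missing.

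Two of your side concerns dissolve once you follow the paper's construction. First, ``re-tiling'' is trivial in the dynamic-queue model because insertions may occur anywhere in the list, not just at the tail; the adversary re-inserts $a^0,\ldots,a^k$ at the front, so the pending set is literally the same multiset of weights every round and no scale-invariance assumption is needed. Second, the adversary's collection choice ($a^k$) is fixed in advance per strategy; only the re-insertions react to the algorithm's realized move, which is well within adaptive-online power.
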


\noindent
{\em Basic idea.}
We fix an online memoryless randomized algorithm $\calA$. At the beginning
$\calA$ is given $n+1$ items: $a^0,a^1,\ldots,a^n$, where
$a = 1 + \frac{1}{n}$. We consider $n+1$ strategies for an adversary.
The \mbox{$k$-th} ($0 \leq k \leq n$) strategy is as follows: in each step
collect $a^k$, delete items $a^0,a^1,...,a^k$, and issue their new copies.
If $\calA$ collected $a^j$ for $j>k$, issue a copy of $a^j$ as well.
This way, in each step exactly one copy of each item is pending for $\calA$,
while the adversary accumulates copies of the items $a^j$ for $j>k$.
Since $\calA$ is memoryless, in each step it uses the same probability
distribution $(q_j)_{j=0}^n$, where $q_j$ is the probability of collecting
$a^j$.

This step is repeated $T \gg n$ times, and after the last step
both the adversary and $\calA$ collect all their pending items.
Since $T \gg n$, we may focus only on the
expected amortized profit in a single step, which is
$a^k + \sum_{i>k} q_i a^i$ for the adversary and
$\sum_i q_i a^i$ for $\calA$. By solving a set of linear equations,
we show (non-constructively) that for any probability distribution
$(q_j)_j$, there exists $k$, for which the ratio between these
gains is at least $\e / (\e-1)$.


\section{Final Comments}



We provided upper and lower bounds for the competitive ratio
of several variants of the item selection problem.
The most intriguing open problems are
to establish better bounds for (1) \emph{dynamic queue case}
and nontrivial bounds for (2) \emph{general randomized case.}
For (1), we have shown a lower bound
of $\approx 1.63$ (improving the lower bound of $\phi$),
but no upper bound better than $2$ is known.
(Better bounds are known for packet scheduling
\cite{LiSeSt07,EngWes07}, but these algorithm use
information about packet deadlines and do not seem to
apply to dynamic queues.) We have also shown
better bounds for some restricted cases:
$1.8$~for the FIFO queue (a generalization
of packet scheduling with agreeable deadlines) and
$\approx 1.737$ for the decremental queue.
For (2), we have an $\e/(\e-1)$ lower bound,
and we have shown it is tight in the uniform
decremental case.


{\small
\bibliographystyle{abbrv}
\bibliography{online,other}

\begin{thebibliography}{10}

\bibitem{AnMaZh03}
N.~Andelman, Y.~Mansour, and A.~Zhu.
\newblock Competitive queueing policies in {QoS} switches.
\newblock In {\em Proc. 14th Symp. on Discrete Algorithms (SODA)}, pages
  761--770. ACM/SIAM, 2003.

\bibitem{AzaRic04B}
Y.~Azar and Y.~Richter.
\newblock An improved algorithm for {CIOQ} switches.
\newblock In {\em Proc. 12th European Symp. on Algorithms (ESA)}, volume 3221
  of {\em LNCS}, pages 65--76. Springer, 2004.

\bibitem{BFKMSS04}
N.~Bansal, L.~Fleischer, T.~Kimbrel, M.~Mahdian, B.~Schieber, and
  M.~Sviridenko.
\newblock Further improvements in competitive guarantees for {QoS} buffering.
\newblock In {\em Proc. 31st International Colloquium on Automata, Languages,
  and Programming (ICALP)}, volume 3142 of {\em LNCS}, pages 196--207.
  Springer, 2004.

\bibitem{CCFJST06}
F.~Y.~L. Chin, M.~Chrobak, S.~P.~Y. Fung, W.~Jawor, J.~Sgall, and T.~Tich\'y.
\newblock Online competitive algorithms for maximizing weighted throughput of
  unit jobs.
\newblock {\em Journal of Discrete Algorithms}, 4:255--276, 2006.

\bibitem{ChiFun03b}
F.~Y.~L. Chin and S.~P.~Y. Fung.
\newblock Online scheduling for partial job values: {Does} timesharing or
  randomization help?
\newblock {\em Algorithmica}, 37:149--164, 2003.

\bibitem{EngWes07}
M.~Englert and M.~Westerman.
\newblock Considering suppressed packets improves buffer management in {QoS}
  switches.
\newblock In {\em Proc. 18th Symp. on Discrete Algorithms (SODA)}, pages
  209--218. ACM/SIAM, 2007.

\bibitem{GuKrMeVr06}
S.~Guti\'{e}rrez, S.~O. Krumke, N.~Megow, and T.~Vredeveld.
\newblock How to whack moles.
\newblock {\em Theor. Comput. Sci.}, 361(2):329--341, 2006.

\bibitem{Hajek01}
B.~Hajek.
\newblock On the competitiveness of online scheduling of unit-length packets
  with hard deadlines in slotted time.
\newblock In {\em Conference in Information Sciences and Systems}, pages
  434--438, 2001.

\bibitem{KLMPSS04}
A.~Kesselman, Z.~Lotker, Y.~Mansour, B.~Patt-Shamir, B.~Schieber, and
  M.~Sviridenko.
\newblock Buffer overflow management in {QoS} switches.
\newblock {\em SIAM J. Comput.}, 33:563--583, 2004.

\bibitem{KeMaSt05}
A.~Kesselman, Y.~Mansour, and R.~van Stee.
\newblock Improved competitive guarantees for {QoS} buffering.
\newblock {\em Algorithmica}, 43:63--80, 2005.

\bibitem{LiSeSt05}
F.~Li, J.~Sethuraman, and C.~Stein.
\newblock An optimal online algorithm for packet scheduling with agreeable
  deadlines.
\newblock In {\em Proc. 16th Symp. on Discrete Algorithms (SODA)}, pages
  801--802. ACM/SIAM, 2005.

\bibitem{LiSeSt07}
F.~Li, J.~Sethuraman, and C.~Stein.
\newblock Better online buffer management.
\newblock In {\em Proc. 18th Symp. on Discrete Algorithms (SODA)}, pages
  199--208. ACM/SIAM, 2007.

\end{thebibliography}
}

\newpage


\setcounter{secnumdepth}{3}

\begin{appendix}

\section{The Uniform Case of Decremental Sets: Randomized Algorithms against Oblivious Adversaries}


In this part of appendix, we prove Theorems~\ref{thm:sets-uniform-lower}
and~\ref{thm:uni-rand}, i.e.~we show that for the uniform case of decremental sets:
\begin{enumerate}[(i)]
\item the competitive ratio of any randomized algorithm is at least $\e/(\e-1)$,
\item the algorithm {\UniRand} is $\e/(\e-1)$-competitive.
\end{enumerate}
Obviously, the lower bound holds also for non-uniform cases as well.


\subsection{Lower Bound}

Fix any set size $n$ and let $\calR = \frac{n}{n+1} \cdot \frac{\e}{\e-1}$.
The proof is based on Yao's min-max principle, 
i.e. we construct a probability distribution $\pi$ over inputs, such that
(i) the gain of an optimal solution is $n$ on any input sequence from the support of $\pi$,
and
(ii) the expected gain of any deterministic algorithm run on a~randomly chosen
input from $\pi$ is at most	$n / \calR$.

We construct the distribution $\pi$ implicitly by the following random
process: at each step, choose uniformly at random any active item $a$,
collect it, and delete it after the step. Obviously, property (i) holds,
and it is sufficient now to show~(ii).

Fix any deterministic algorithm $\calA$. Without loss of generality,
$\calA$ always collects an item if one is pending.
Let $E_{a,p}$ be the expected number of items collected by $\calA$
if $a$ items are active and, among them, $p$ items are pending.
If $p =0,1$, then $\calA$ collects $p$ items.
On the other hand, if $p > 1$, $\calA$ collects an item in the
first step, reducing the number of pending items to $p-1$, and then,
with probability $(p-1)/a$ another pending element may get deleted.
Hence, $E_{a,p}$ satisfies the following recurrence for any $a\ge 1$:
$E_{a,0} = 0$, $E_{a,1} = 1$, and
\begin{eqnarray*}
E_{a,p} &=&
	\frac{a-p+1}{a} \cdot E_{a-1,p-1} + \frac{p-1}{a} \cdot E_{a-1,p-2} + 1
\end{eqnarray*}
for $a\ge p \ge 2$.

The following technical lemma, can be used to bound the performance of any
deterministic algorithm on $\pi$.


\begin{lemma}\label{lem: bound on E_ap}
For all $a\ge p \ge 0$, we have $E_{a,p} \leq (a+1) (1-\e^{-p/a}) + 1$.
\end{lemma}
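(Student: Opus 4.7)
My plan is induction on $a$. The base cases $p \in \{0, 1\}$ are immediate: when $p=0$, $E_{a,0} = 0 \leq 1$; when $p=1$, $E_{a,1} = 1$, and the claimed upper bound $(a+1)(1 - e^{-1/a}) + 1$ is at least $1$. These handle the smallest $a$ since $a \geq p$, as well as the non-recursive values of $p$ at arbitrary $a$.

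For the inductive step, I fix $a \geq p \geq 2$ and invoke the hypothesis for the pairs $(a-1, p-1)$ and $(a-1, p-2)$ (both are in range since $a - 1 \geq p - 1 \geq 1$). Substituting the inductive bounds into the recurrence for $E_{a,p}$, using that the weights $\tfrac{a-p+1}{a}$ and $\tfrac{p-1}{a}$ sum to $1$ to collapse the two ``$+1$'' constants inside the brackets into a single $+1$, and simplifying, the target inequality $E_{a,p} \leq (a+1)(1 - e^{-p/a}) + 1$ reduces to
\begin{equation*}
(a+1)\, e^{-p/a} \;\leq\; (a-p+1)\, e^{-(p-1)/(a-1)} + (p-1)\, e^{-(p-2)/(a-1)}.
\end{equation*}

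To prove this key inequality, I would rewrite its right-hand side as $a$ times the convex combination $\tfrac{a-p+1}{a}\, g(p-1) + \tfrac{p-1}{a}\, g(p-2)$, where $g(t) = e^{-t/(a-1)}$. Since $g$ is convex, Jensen's inequality bounds this combination below by $g\bigl(\tfrac{(p-1)(a-1)}{a}\bigr) = e^{-(p-1)/a}$, so the right-hand side is at least $a\, e^{-(p-1)/a}$. It then suffices to verify $(a+1)\, e^{-p/a} \leq a\, e^{-(p-1)/a}$, which rearranges to $1 + 1/a \leq e^{1/a}$ and follows from $1 + x \leq e^x$.

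The step I expect to be the main obstacle is spotting the convexity trick: one must consolidate the two disparate exponentials on the right-hand side into a single exponential matching the form of the left-hand side. Jensen's inequality applied to $g(t) = e^{-t/(a-1)}$ achieves exactly this, because the weights and exponents produced by the recurrence align so that the weighted argument $(p-1)(a-1)/a$, divided by $(a-1)$ in $g$'s exponent, leaves precisely $-(p-1)/a$. Once this alignment is identified, everything reduces to routine algebra and the standard inequality $1+x \leq e^x$.
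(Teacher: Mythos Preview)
Your argument is correct. The induction scheme and the reduction to the key inequality
\[
(a+1)\,\e^{-p/a} \;\le\; (a-p+1)\,\e^{-(p-1)/(a-1)} + (p-1)\,\e^{-(p-2)/(a-1)}
\]
match the paper exactly. The only difference is how this inequality is established. The paper factors $\e^{-p/a}$ out of the right-hand side and applies $\e^x \ge 1+x$ separately to each of the two remaining exponentials; an algebraic identity, $(a-p+1)(a-p)+(p-1)(2a-p)=a(a-1)$, then makes the resulting lower bound collapse to exactly $a+1$. You instead recognize the right-hand side as $a$ times a convex combination of values of $t\mapsto \e^{-t/(a-1)}$, apply Jensen once to obtain $a\,\e^{-(p-1)/a}$, and finish with the single application $1+1/a\le \e^{1/a}$. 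Your route trades the somewhat opaque algebraic cancellation for a cleaner structural observation; the paper's route is purely computational but equally short. Both are valid and of comparable length.
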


\begin{proof}
We prove the lemma by induction on $p$. The case $p=0$ holds trivially.
For $a\ge p = 1$, $E_{a,p}=1 \leq (a+1)(1 - \e^{-1/a}) + 1$, as
$(a+1)(1 - \e^{-1/a}) > 0$.

For $a\ge p\ge 2$, we use the inductive assumption and
the recursive definition of $E_{a,p}$, getting
\begin{eqnarray*}
E_{a,p} &= &
		\frac{a-p+1}{a} \cdot E_{a-1,p-1} + \frac{p-1}{a} \cdot E_{a-1,p-2} + 1 \\
	&\leq &
	 	(a-p+1)\left(1 - \e^{ - \frac{p-1}{a-1} } \right)
		+ (p-1) a \left(1 - \e^{ - \frac{p-2}{a-1} } \right) + 2
		\\
	&= & a+2
			- \e^{-p/a}\cdot
				\brackd{ (a-p+1)\e^{ \frac{a-p}{a(a-1)} }
					+ (p-1)\e^{ \frac{2a-p}{a(a-1)} }
					}
				\\
	&\leq & a+2
		 - \e^{-p/a} \cdot \brackd{
		 			(a-p+1) \left(1+ \frac{a-p}{a(a-1)}\right)
						+ (p-1) \left(1 + \frac{2a-p}{a(a-1)} \right)
						}
		\\
		&= & (a+1) (1-\e^{-p/a}) + 1\enspace,
\end{eqnarray*}
where the second inequality follows from $\e^x \ge 1+x$. This
completes the inductive step and the proof of the lemma.
\qed
\end{proof}

By Lemma~\ref{lem: bound on E_ap}, we have that $E_{n,n}$, the expected number
of items collected by~$\calA$, is at most $(n+1)(\e-1)/\e + 1 = n/\calR+O(1)$.
Thus, property (ii) holds and, applying Yao's principle and taking the
limit on $n$, we obtain our lower bound.


\subsection{Upper Bound}

In this part we show that the algorithm {\UniRand} (which at each step collects one
randomly chosen pending item) is $\e/(\e-1)$-competitive.

The idea behind the competitive analysis is to
prove first that the optimal strategy of the adversary can be easily described:
without loss of generality, at each step the adversary
collects one item and deletes the same item. Then we analyze the
competitive ratio of {\UniRand} against such an adversary.

\medskip

The number of items in the initial set is denoted by $n$. By
$a$ we denote the number of active items at a given step,
and by $p\le a$ the number of pending items. Note that
$p$ is a random variable. We consider states of the game
between {\UniRand} and the adversary,
conditioned on $p$ being fixed, and we refer to such a state
as a \emph{configuration $(a,p)$}.  The definition
of {\UniRand} implies that in configuration $(a,p)$
each active item is pending with  probability $p/a$.

Starting from each fixed configuration $(a,p)$,
we analyze the relation between the gain of {\UniRand} and
that of the adversary.
Since the items are identical
and {\UniRand}'s pending items are distributed
uniformly, we need not specify which items are
deleted by the adversary in a given step, only their
number.

We can split these steps into smaller parts: an \emph{elementary}
step consists in either collecting an item or deleting an item.
We can thus describe the adversary's strategy as a sequence
$S \in \{\take, \dele \}^*$, where $\take$ means that the adversary
collects an item (and allows {\UniRand} to collect one as well,
provided it has some pending items)
and $\dele$ means that the adversary deletes an item.
In the following we consider
only the \emph{feasible} strategies~$S$ for configuration $(a,p)$,
i.e. those in which:
(i) the number of $\dele$~operations in~$S$ is $a$,
(ii) in every suffix of $S$ the number of $\take$ operations does
not exceed the number of $\dele$~operations.
We call strategy~$S$ a $k$-{\em strategy} if it contains $k$ $\take$
operations, i.e.~the gain of the adversary on $S$ is $k$.

By $E_{a,p}[S]$ we denote the expected gain of
Algorithm~{\UniRand} starting from configuration $(a,p)$,
versus adversary strategy $S$.

The following lemma shows that the best strategy for the adversary, which guarantees
that the adversary takes $k$ elements, is $(\take\dele)^k$ run on $k$-element set.

\begin{lemma}\label{lemma:k_optimal_strategy}
Let $S$ be a $k$-strategy for $(a,a)$. Then
$E_{a,a}[S] \geq E_{k,k}[(\take\dele)^k]$.
\end{lemma}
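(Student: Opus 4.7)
My plan is to reduce any $k$-strategy $S$ on $(a,a)$ to the canonical form $\dele^{a-k}(\take\dele)^k$ by a sequence of local exchanges between adjacent operations, each of which does not increase $E_{a,a}[\cdot]$. Executing the canonical form from $(a,a)$ first uses the initial $a-k$ deletes to bring the state to $(k,k)$, whence
\[
E_{a,a}[\dele^{a-k}(\take\dele)^k]=E_{k,k}[(\take\dele)^k].
\]
Combined with the monotone non-increase of $E_{a,a}[\cdot]$ at every exchange, this yields the lemma.

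The key exchange rearranges a consecutive pair $\take\,x,\,\dele\,y$ with $x\neq y$ into $\dele\,y,\,\take\,x$. Let $(b,q)$ be the state just before the pair. If $y$ is not pending, the two orderings produce the same outcome. If $y$ is pending and $q\geq 2$, both orderings give {\UniRand} an immediate gain of $1$ from the $\take$; the post-swap state is deterministically $(b-1,q-2)$, while the pre-swap state is $(b-1,q-1)$ with probability $1/q$ (when {\UniRand} happened to pick $y$) and $(b-1,q-2)$ with probability $(q-1)/q$. Since the function $p\mapsto E_{a,p}[\cdot]$ is non-decreasing, the pre-swap continuation gives an expected future gain at least as large as the post-swap continuation; the edge case $q=1$ is easier and strictly favors the pre-swap ordering (gain $1$ versus $0$). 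Swapping two adjacent $\dele$ operations is trivially neutral.

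Iteratively applying these exchanges, subject to the obvious constraint that no $\dele\,z$ may be moved past a preceding $\take\,z$ on the same item, pushes every delete as far left as possible. Each delete ends up either at the very front of $S$ (for the $a-k$ items never taken by the adversary, the ``orphan'' deletes) or immediately after the $\take$ on its own item, producing precisely the canonical form $\dele^{a-k}(\take\dele)^k$.

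The main obstacle is the monotonicity property ``$p\mapsto E_{a,p}[S'']$ is non-decreasing'' that underpins the exchange. I would prove it by a standard coupling induction on $|S''|$: run {\UniRand} simultaneously on $(a,p)$ and $(a,p+1)$ with coupled randomness and a designated ``ghost'' extra pending item in the larger-$p$ run, then verify that at every $\take$ the larger-$p$ run collects at least as many real items as the smaller-$p$ run. The delicate step is handling the event where {\UniRand} in the larger-$p$ run picks the ghost item, since the two runs' pending sets then diverge and must be recoupled in subsequent steps.
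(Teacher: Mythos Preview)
Your plan is correct and matches the paper's proof almost exactly: reduce $S$ to the canonical form $\dele^{a-k}(\take\dele)^k$ by repeatedly swapping an adjacent pair $\take\dele\to\dele\take$, each swap being justified by the monotonicity of $E_{a',p}[S'']$ in $p$, and then observe that from $(a,a)$ the leading $\dele^{a-k}$ deterministically drives the state to $(k,k)$. Two simplifications relative to your write-up: because items are identical and {\UniRand} is symmetric, the paper works in an \emph{anonymous} model where $\take$ and $\dele$ carry no item labels (so there is no ``$\dele z$ past $\take z$'' constraint to track, and feasibility is just the suffix count condition); and the monotonicity lemma $E_{a,p+1}[S'']\geq E_{a,p}[S'']$ is proved by a short direct induction on $(|S''|,p)$ using the one-step recurrences for $\take$ and $\dele$, which entirely avoids the recoupling issue you flag in the coupling approach.
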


On the other hand for these simple adversary strategies, we may appropriately bound
the $E_{a,p}$ values.

\begin{lemma}\label{lemma:upper_bound_estimation}
For any integers $p \leq a$, it holds that
$E_{a,p}[(\take\dele)^a] \geq a(1-(1 - 1/a)^p)$.
\end{lemma}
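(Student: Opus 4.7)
The plan is to turn the bound into a clean algebraic inequality by working with the complementary quantity $a - E_{a,p}$ and reducing it to a binomial expansion.

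First I would verify that $E_{a,p}[(\take\dele)^a]$ satisfies the recurrence already recorded in the body, namely $E_{a,p} = 1 + \frac{a-p+1}{a}\,E_{a-1,p-1} + \frac{p-1}{a}\,E_{a-1,p-2}$ for $p \ge 2$, with $E_{a,0}=0$ and $E_{a,1}=1$. In configuration $(a,p)$ both parties collect one item during the $\take$ operation (adding $1$ to {\UniRand}'s gain when $p \ge 1$); by symmetry of the identical items, the item the adversary took is still pending for {\UniRand} afterwards with probability $(p-1)/a$, which is the product of $p/a$ (it was initially pending) and $(p-1)/p$ ({\UniRand} did not pick it among its $p$ pending items). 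The subsequent $\dele$ removes that item, yielding configuration $(a-1,p-2)$ with probability $(p-1)/a$ and $(a-1,p-1)$ with probability $(a-p+1)/a$, which gives the claimed recurrence.

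Next I would set $F(a,p) := a - E_{a,p}$. Using $\frac{a-p+1}{a} + \frac{p-1}{a} = 1$, the recurrence simplifies to $F(a,p) = \frac{a-p+1}{a}\,F(a-1,p-1) + \frac{p-1}{a}\,F(a-1,p-2)$, with $F(a,0)=a$ and $F(a,1)=a-1$. The inequality to be proved is equivalent to $F(a,p) \le (a-1)^p / a^{p-1}$, which I would establish by induction on $p$. The base cases $p=0,1$ hold with equality. For $p \ge 2$, substituting the inductive hypothesis into the recurrence and factoring out $(a-2)^{p-2} / [a (a-1)^{p-2}]$ leaves the bracket $(a-p+1)(a-2) + (p-1)(a-1)$, which expands to $a^2 - 2a + p - 1$.

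It then remains to check the resulting inequality $a^{p-2}(a-2)^{p-2}(a^2 - 2a + p - 1) \le (a-1)^{2p-2}$. The key move is the substitution $b := a^2 - 2a = (a-1)^2 - 1$, for which $a^{p-2}(a-2)^{p-2} = b^{p-2}$ and $(a-1)^{2p-2} = (b+1)^{p-1}$; the inequality becomes $b^{p-2}(b + p - 1) \le (b+1)^{p-1}$. Since $b \ge 0$ for $a \ge 2$, the binomial expansion $(b+1)^{p-1} = \sum_{k=0}^{p-1} \binom{p-1}{k} b^k$ has its two leading terms $b^{p-1} + (p-1) b^{p-2}$ matching the left side exactly, so the inequality amounts to the non-negativity of the lower-order terms $\sum_{k=0}^{p-3} \binom{p-1}{k} b^k$. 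The main obstacle I anticipate is spotting the substitution $b = (a-1)^2 - 1$; without the observation $a(a-2) = (a-1)^2 - 1$ the inductive algebra looks messy, but with it the bound collapses to a one-line binomial argument, and the lemma follows from $E_{a,p} = a - F(a,p) \ge a - (a-1)^p/a^{p-1} = a\bigl(1-(1-1/a)^p\bigr)$.
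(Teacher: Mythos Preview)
Your proof is correct. It differs from the paper's in packaging rather than in substance: the paper inducts on $a$, works directly with $E_{a,p}$, and closes the gap via Bernoulli's inequality $(1+x)^{p-2}\ge 1+(p-2)x$ together with the identity $a(a-2)=(a-1)^2-1$; you induct on $p$, pass to the complement $F(a,p)=a-E_{a,p}$, and make the same identity explicit through the substitution $b=(a-1)^2-1$, after which the target inequality $b^{p-2}(b+p-1)\le(b+1)^{p-1}$ is just the statement that the two top binomial terms are dominated by the full expansion. Both routes ultimately rest on the same algebraic fact; your complement-plus-substitution version is a bit more transparent, while the paper's Bernoulli step avoids introducing the auxiliary $F$ and $b$.
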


While we prove the lemmas above in the following subsections, we argue now
how they imply the competitiveness of {\UniRand}.

\begin{proof}[of Theorem~\ref{thm:uni-rand}]
We fix any $n$-element set and any adversary's strategy $S$. Let $k$ be the number of
items collected by the adversary.
By Lemmas~\ref{lemma:k_optimal_strategy} and \ref{lemma:upper_bound_estimation},
the gain of {\UniRand} is $E_{n,n}[S] \geq E_{k,k}[(\take\dele)^k] \geq k (1 - (1-1/k)^k)$.
Therefore, the competitive ratio is at most
$$
\frac k {E_{n,n}[S]}  \;\leq\;
  		\brackd{1 - \left(1 - \frac {1}{k}\right)^k }^{-1}
			\;<\;   \frac {\e} {\e-1} \enspace.
$$
\qed
\end{proof}


\subsubsection{Relations between Adversary Strategies\\\\}%
\medskip%
In this subsection, we show why $(\take\dele)^k$ is the best $k$-strategy. First,
we prove two technical lemmas.

\begin{lemma}\label{lem: E_ap monotone with p}
For any $p < a$ and strategy $S$, it holds that $E_{a,p+1}[S] \geq E_{a,p}[S]$.
\end{lemma}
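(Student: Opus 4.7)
The plan is to proceed by induction on the length $|S|$ of the adversary strategy, with a case split on the first elementary operation. The base case $S=\emptyset$ is trivial, since both $E_{a,p}[\emptyset]$ and $E_{a,p+1}[\emptyset]$ equal $0$. For the inductive step I would write $S=\sigma S'$ and analyze the two possibilities for $\sigma$.

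If $\sigma=\take$, then from $(a,p+1)$ Algorithm {\UniRand} always collects (since $p+1\ge 1$), gaining $1$ and transitioning to $(a,p)$; from $(a,p)$ it also collects and transitions to $(a,p-1)$ when $p\ge 1$, and otherwise gains nothing and the state is unchanged. Subtracting gives
$E_{a,p+1}[\take S']-E_{a,p}[\take S'] = E_{a,p}[S']-E_{a,p-1}[S']$
when $p\ge 1$, which is $\ge 0$ by the inductive hypothesis on the shorter strategy $S'$, and it equals the trivially non-negative value $1$ when $p=0$.

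If $\sigma=\dele$, the deleted item is uniformly distributed among the $a$ active items and is pending for {\UniRand} with probability $p/a$ from $(a,p)$ and $(p+1)/a$ from $(a,p+1)$, giving
\begin{align*}
E_{a,p}[\dele S'] &= \tfrac{p}{a}E_{a-1,p-1}[S'] + \tfrac{a-p}{a}E_{a-1,p}[S'],\\
E_{a,p+1}[\dele S'] &= \tfrac{p+1}{a}E_{a-1,p}[S'] + \tfrac{a-p-1}{a}E_{a-1,p+1}[S'].
\end{align*}
The coefficients of $E_{a-1,p-1}[S'],E_{a-1,p}[S'],E_{a-1,p+1}[S']$ in the difference are $-p/a,(2p+1-a)/a,(a-p-1)/a$, which sum to $0$ and can be regrouped into consecutive telescoped differences, yielding
\[
E_{a,p+1}[\dele S']-E_{a,p}[\dele S'] = \tfrac{1}{a}\bigl[\,p\,(E_{a-1,p}[S']-E_{a-1,p-1}[S']) + (a-p-1)(E_{a-1,p+1}[S']-E_{a-1,p}[S'])\,\bigr].
\]
Since $p<a$ gives $a-p-1\ge 0$, both outer coefficients are non-negative, and both bracketed differences are non-negative by the inductive hypothesis applied to $S'$; the right-hand side is therefore non-negative.

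The only slightly non-routine observation is the regrouping in the $\dele$ case, which reflects the elementary fact that any length-three weighting whose coefficients sum to $0$ can be written as a combination of consecutive differences; once this is noticed the whole argument is mechanical, and I anticipate no real obstacle. Boundary configurations ($p=0$ or $p+1=a$) are harmless because in each such case the would-be ill-defined term (e.g.\ $E_{a-1,-1}[S']$ or $E_{a-1,a}[S']$) appears with coefficient $0$.
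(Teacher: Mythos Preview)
Your argument is correct and follows essentially the same route as the paper's proof: induction on the length of $S$, case split on the first symbol, and in the $\dele$ case the same telescoping into the two consecutive differences $E_{a-1,p}-E_{a-1,p-1}$ and $E_{a-1,p+1}-E_{a-1,p}$. The paper phrases its induction lexicographically on $(|S|,p)$, but every inductive call it makes is to a strictly shorter $S'$, so your induction on $|S|$ alone is already sufficient (and slightly cleaner); your handling of the boundary cases $p=0$ and $p+1=a$ is also fine.
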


\begin{proof}
We prove the inequality by induction on pairs $(p,S)$,
where $(p_1,S_1)<(p_2,S_2)$ if $|S_1| < |S_2|$ or
$|S_1| = |S_2|$ and $p_1<p_2$.
The inductive basis is straightforward:
if $S=\dele^a$ then $E_{a,p+1}[\dele^a] = 0 = E_{a,p}[\dele^a]$.
If $p=0$, then clearly $E_{a,1}[S] \geq 0 = E_{a,0}[S]$.

In the inductive step, we have two cases depending on
the first action of $S$. If $S=\take S'$, then
$E_{a,p+1}[\take S']=1+E_{a,p}[S'] \geq 1+E_{a,p-1}[S']=E_{a,p}[\take S']$,
by the inductive assumption.
If $S=\dele S'$, since we have $p+1$ pending items,
the adversary deletes a pending item
with probability $\frac{p+1}{a}$ and a non-pending item with
probability $\frac{a-p-1}{a}$. Using the inductive assumption, we get
\begin{align*}
E_{a,p+1}[\dele S']
	=&\; 	\frac{a-p-1}{a} E_{a-1,p+1}[S'] + \frac{p+1}{a} E_{a-1,p}[S'] \\
	\geq&\; \frac{a-p-1}{a} E_{a-1,p}[S'] +
			\brackd{ \frac{1}{a} E_{a-1,p}[S'] + \frac {p}{a}E_{a-1,p-1}[S'] } \\
	=& \; E_{a,p}[\dele S'] \enspace.
\end{align*}
\qed
\end{proof}

\begin{lemma}\label{lemma:td_inversing}
Let $S=S_1\take \dele S_2$ be a feasible strategy for $(a,p)$.
If $S'=S_1\dele \take S_2$ is also a feasible strategy, then
$E_{a,p}[S] \geq E_{a,p}[S']$.
\end{lemma}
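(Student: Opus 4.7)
The plan is to prove the inequality by induction on $|S_1|$.

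For the base case $S_1 = \epsilon$, I compare $E_{a,p}[\take\dele S_2]$ with $E_{a,p}[\dele\take S_2]$ by unfolding both according to the recurrences used in the proof of Lemma~\ref{lem: E_ap monotone with p}. For $p \ge 2$ the unfolding yields
$$E_{a,p}[\take\dele S_2] = 1 + \tfrac{p-1}{a} E_{a-1,p-2}[S_2] + \tfrac{a-p+1}{a} E_{a-1,p-1}[S_2]$$
and
$$E_{a,p}[\dele\take S_2] = 1 + \tfrac{p}{a} E_{a-1,p-2}[S_2] + \tfrac{a-p}{a} E_{a-1,p-1}[S_2],$$
so their difference equals $\tfrac{1}{a}\bigl(E_{a-1,p-1}[S_2]-E_{a-1,p-2}[S_2]\bigr)$, which is nonnegative by Lemma~\ref{lem: E_ap monotone with p}. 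For $p = 1$ a direct evaluation gives difference $\tfrac{1}{a} \ge 0$, and for $p = 0$ both sides collapse to $E_{a-1,0}[S_2]$, completing the base case.

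For the inductive step I split on the first letter of $S_1$. If $S_1 = \take\,S_1'$, then both $E_{a,p}[\take S_1'\take\dele S_2]$ and $E_{a,p}[\take S_1'\dele\take S_2]$ equal the same gain from the leading $\take$ (namely $\min(1,p)$) plus the same coefficient in front of $E_{a,\max(p-1,0)}[\,\cdot\,]$ of the remaining strategies, so the inductive hypothesis applied to the shorter prefix $S_1'$ finishes the case. If $S_1 = \dele\,S_1'$, then each expected value decomposes as $\tfrac{p}{a}E_{a-1,p-1}[S_1'\,\cdots] + \tfrac{a-p}{a}E_{a-1,p}[S_1'\,\cdots]$ with identical coefficients on both sides, so applying the inductive hypothesis to each of the two resulting subproblems (whose prefixes have strictly smaller length) gives the desired inequality termwise.

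The main obstacle is the algebra of the base case: one has to carry out the two-step expansion so that the cross-terms cancel precisely to the monotonicity gap $E_{a-1,p-1}[S_2]-E_{a-1,p-2}[S_2]$, after which Lemma~\ref{lem: E_ap monotone with p} does the real work. Some care is also needed to handle the degenerate values $p \in \{0,1\}$ in which $\take$ yields no reward or $\dele$ acts deterministically, but these are resolved cleanly by direct substitution into the definition of $E_{a,p}[\,\cdot\,]$. No new machinery beyond Lemma~\ref{lem: E_ap monotone with p} is required.
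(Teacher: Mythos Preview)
Your proof is correct and follows essentially the same approach as the paper. Both handle the base case $S_1=\epsilon$ by the same two-step expansion and appeal to Lemma~\ref{lem: E_ap monotone with p}; for the general case, the paper observes in one line that $E_{a,p}[S_1 X]=\sum_i c_i\,E_{a-k,p-i}[X]$ for coefficients depending only on $S_1,a,p$ and then applies the base case termwise, which is exactly what your induction on $|S_1|$ establishes letter by letter.
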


\begin{proof}
Suppose first that $S=\take \dele S_2$.
If $p=0$, then the claim is obvious, as $E_{a,0}[S]=E_{a,0}[S']=0$, and
if $p=1$, then $E_{a,1}[S]=1 \geq E_{a,1}[S']$.
So consider $p>1$. By direct calculation and Lemma~\ref{lem: E_ap monotone with p}, we get
\begin{eqnarray*}
E_{a,p}[\take \dele S_2]
	&=&  1+E_{a,p-1}[\dele S_2]
			\\
	&=&  1+\frac{a-p+1}{a} E_{a-1,p-1}[S_2]+ \frac{p-1}{a} E_{a-1,p-2}[S_2]
			\\
	&\geq& 1+\frac{a-p}{a} E_{a-1,p-1}[S_2]+ \frac{p}{a} E_{a-1,p-2}[S_2]
	 		\\
	&=& \frac{a-p}{a} E_{a-1,p}[\take S_2]+ \frac{p}{a} E_{a-1,p-1}[\take S_2]
			\\
	&=& E_{a,p}[\dele\take S_2] \enspace.
\end{eqnarray*}
Now consider the general case, when $S=S_1\take\dele S_2$ and $S'=S_1\dele\take S_2$. Then
\begin{align*}
E_{a,p}[S] \;=\; E_{a,p}[S_1\take\dele S_2]
			\;=\; \sum_i c_i E_{a-k,p-i}[\take\dele S_2]
	\\
E_{a,p}[S'] \;=\; E_{a,p}[S_1\dele \take S_2]
		\;=\; \sum_i c_i E_{a-k,p-i}[\dele\take S_2]
\end{align*}
for some $k$, $\{c_i\}$. By the previous calculations, it holds that
$E_{a-k,p-i}[\take\dele S_2] \geq E_{a-k,p-i}[\dele\take S_2]$ for all $i$,
and hence
\begin{align*}
E_{a,p}[S] =&\; E_{a,p}[S_1\take\dele S_2] \\
			=&\; \sum_i c_i E_{a-k,p-i}[\take\dele S_2] \\
			\geq&\; \sum_i c_i E_{a-k,p-i}[\dele\take S_2] \\
			=&\; E_{a,p}[S_1\dele\take S_2] \\
			=&\; E_{a,p}[S'] \enspace,
\end{align*}
completing the proof.
\qed
\end{proof}

\begin{proof}[of Lemma~\ref{lemma:k_optimal_strategy}]
We take any feasible $k$-strategy $S$ starting from configuration $(a,a)$. It is straightforward, that
we may iteratively apply Lemma~\ref{lemma:td_inversing} and swap
consecutive $\take$ and $\dele$ operations, obtaining
a feasible strategy $\dele^{a-k}(\take\dele)^k$ at the end.
Thus $E_{a,a}[S] \geq E_{a,a}[\dele^{a-k}(\take\dele)^k] = E_{k,k}[(\take\dele)^k]$
\qed
\end{proof}


\subsubsection{Bound for the Best Adversary Strategy}

\begin{proof}[of Lemma~\ref{lemma:upper_bound_estimation}]
For $p = 0,1$, $E_{a,p}[(\take\dele)^a] = p$ and the lemma trivially holds.
Also if $a = p = 2$, then
$E_{a,p}[(\take\dele)^a] = 3/2 = a(1-(1 -1/a)^p)$ .

For the remaining values of $a$ and $p$ we prove the lemma by induction on $a$.
We note that for $p \geq 2$
$$
E_{a,p}[(\take\dele)^a] \;=\;
 		\frac{a-p+1}{a} E_{a-1,p-1}[(\take\dele)^{a-1}]
				+ \frac{p-1}{a} E_{a-1,p-2}[(\take\dele)^{a-1}] + 1 \enspace,
$$
as the pending items are distributed uniformly among active items.
Note that this is the same recurrence as in the proof of the lower bound.
Thus, using the induction assumption, we get
\begin{eqnarray*}
	E_{a,p}[(\take\dele)^a]
		&\geq
			&\frac{a-p+1}{a} \left(a-1\right)\left(1-\left(\frac {a-2}{a-1}\right)^{p-1}\right) \\
			&&\enspace \;+\; \frac{p-1}{a} \left(a-1\right)\left(1-\left(\frac {a-2}{a-1}\right)^{p-2}\right)
			+ 1
			\\
	&=&
		a - \frac{(a-1)^2}{a} \left(\frac {a-2}{a-1}\right)^{p-2}
				\left(1 + \frac{p-2}{(a-1)^2} \right)
			\\
	&\geq&
		a - \frac{(a-1)^2 }{a}\left(\frac {a-2}{a-1}\right)^{p-2}
			\left(1 + \frac{1}{(a-1)^2} \right)^{p-2}
		 \\
	&\geq&
	 	a - \frac{(a-1)^2}{a} \left(\frac {a-1}{a}\right)^{p-2}
		\\
	&=&  a \left(1 - \left(\frac {a-1} a \right)^p\right),
\end{eqnarray*}
completing the proof.
\qed
\end{proof}

\section{Deterministic Algorithms for Dynamic Queues}
\subsection{Lower Bounds}

\begin{proof}[of Theorem~\ref{thm: det queue lower bound}]
We now exhibit a sequence of $6$ items for which
Lemma~\ref{lem: det queue lb, any n} holds with $\calR \approx 1.63$.
To simplify notation, we rename the items:
$z_2, z_4, z_6, z_3, z_1$ as  $x,y,z,u,v$, respectively.
Otherwise we follow the aforementioned idea.

By Lemma~\ref{lem: det queue lb, any n}, we want to find
numbers $x,y,z,u,v$ such that $0 < z < y < u < x < v < 1$ and
a maximal $\calR$ for which:
\begin{eqnarray*}
\calR \cdot x & \leq & 1 \\
\calR \cdot 1 & \leq & 1+v \\
\calR \cdot (v+y) & \leq & 1+v \\
\calR \cdot (1+v) & \leq & 1+v+x+u \\
\calR \cdot (v+u+z) & \leq & 1+v+x \\
\calR \cdot (1+v+u) & \leq & 1+v+x+u+z
\end{eqnarray*}
We can solve it by replacing inequalities by equations, and
after doing substitutions, the problem reduces to finding
a solution of a polynomial equation $ x^5 + x^4 + 5x^3 - x^2 - 1 =0$.
This polynomial has exactly one real root, $x = 0.61238...$,
which yields $\calR = 1.6329...$.\qed
\end{proof}

\begin{proof}[of Theorem \ref{thm: det queue lb memoryless}]
Fix a memoryless algorithm $\calA$.
We give an adversary's strategy where the adversary's gain
is $2-o(1)$ times $\calA$'s gain.

Pick large integers $n$ and $T\gg n$, and
let $X = \braced{x_0,...,x_n}$ where
$w_{x_i} = 1+\frac{i}{n}$ for $i = 0,1,...,n$.
The adversary maintains the invariant that
at each step $\calA$'s pending set is $X$, with the items
ordered by increasing value. Suppose that
for this pending set $\calA$ collects some item $x_k$.

If $k=0$, the adversary collects item $x_n$,
deletes all items, inserts copies of all items from $X$ again
into the queue, and repeats the process $T$ times.
$\calA$'s gain is $Tw_{x_0} = T$ while the optimum gain is $Tw_{x_n} = 2T$,
so the ratio is $2$.

Suppose now that $k \ge 1$. In this case,
the adversary collects $x_{k-1}$, deletes all items $x_0,...,x_{k-1}$ for
$i=0,1,...,k-1$ and inserts new copies of
items $x_0,...,x_k$. This process is repeated $T$ times.
After $T$ steps, the adversary
collects the remaining uncollected items, in particular,
all $T$ copies of item $x_k$. $\calA$ can of course
collect the remaining pending items.
The value collected by $\calA$
is at most $Tw_{x_k}+2(n+1) = T(1+k/n) + 2(n+1)$,
while the value collected by the adversary is at least
$T(w_{x_{k-1}} + w_{x_k}) = T(2+(2k-1)/n)$. So with
$T = n^3$ and $n \to\infty$ the ratio approaches $2$.
\qed
\end{proof}

\subsection{Upper Bound for FIFO Queues}

\begin{proof}[of Theorem~\ref{thm: fifo det 1.8 algorithm}]

We define \emph{stages} of the algorithm. The first stage begins before {\FIFOQueEH}
collects any item. Each next stage begins immediately after previous stage
ends. The stage ends when $h$ is deleted by the adversary or
condition (H) holds. The last step of the stage is the last step $t$
such that at beginning of $t$, $h'$ was pending for {\FIFOQueEH}.

Let $e_1,e_2, \ldots e_k$ be the set of items collected (in this order) by {\FIFOQueEH}
in one stage when (E) condition held.
Let $h_1, \ldots, h_k$ be the corresponding heavy items. From the algorithm and
the definition of FIFO queues, we have:

\begin{fact}
For all $i=1,...,k$ we have $e_i \beforeeq h_i$ (with all relations strict,
except possibly for $i=k$) and $h_i\beforeeq h_{i+1}$.
\end{fact}

The proof is by amortized analysis. We preserve
the invariant that, after each stage,
each item $i$ that is pending for the adversary
but has already been collected by {\FIFOQueEH} has credit associated
with it of value equal $w_i$. The adversary's amortized gain is
equal the his actual gain plus the total credit change.
To prove the theorem, it is then sufficient to prove the following claim:

\smallskip\noindent
$(\ast)$ In each stage the adversary's amortized
gain is at most $1.8$ times {\FIFOQueEH}'s gain.

\smallskip
We prove $(\ast)$ for each of the following cases
(1)
the stage ends because the heaviest item is deleted,
or
(2)
the stage ends because {\FIFOQueEH} collects $h'$.
The second case has three sub-cases, depending
on which condition the latest item $a$ collected by the
adversary satisfies:
(2a) $a \before e_l$ for some $l \leq k$,
(2b) $e_k \beforeeq a \before h_k$, or
(2c) $h_k \beforeeq a $.

\noindent
\mycase{1} The stage ended because item $h$ was deleted.
The adversary cannot gain credit for any items taken
by the algorithm, as they are not pending after that stage.
Hence the gain of the adversary is at most
$\sum_{i=1}^k w_{h_i}$. The gain of the algorithm is
$\sum_{i=1}^k w_{e_i}$.
As $w_{e_i} \geq \beta w_{h_i}$ for all $i$,
the competitive ratio is at most~$\frac 1 \beta = 1.5$.

\noindent \mycase{2}
The stage ends, because $\alpha w_{h_{k+1}} < w_{h_k}$ and {\FIFOQueEH}
collected $h_k$.

\noindent \mycase{2a}
Suppose that $a \before e_l $ for some $l \leq k$, choose minimal such $l$.
Let $c_i = w_{e_i} - \beta w_{h_i}$ for $i=1, \ldots , l-1$.
We estimate the amortized  adversary's gain:
in step $i \leq l-1$ the heaviest pending item is $h_i $, and
in step $i \in \{l \ldots  k+1\}$ the adversary collects item preceding $e_l$.
If this item is still pending for the algorithm,
its weight is at most $\beta w_{h_l}$.
If it is not pending for the algorithm, it is one of the $e_i$'s,
for some $i<l$. Then the gain of the adversary is
$w_{e_i} = \beta w_{h_i} + c_i < \beta w_{h_l} + c_i $.
As we can collect each such $e_i$ only once, all
these steps add to at most $(k-l+2) \beta w_{h_l} + \sum_{i=1}^{l-1} c_i$.
The adversary gets credit for all items taken by
the algorithm in steps $l,\ldots, k+1$ that is
$\sum_{i=l}^k w_{e_i} + w_{h_k}$.
Thus the amortized gain of the adversary is at most
$$
\sum_{i=1}^{l-1} w_{h_i} + (k-l+2)\beta w_{h_l} + \sum_{i=1}^{l-1} c_i + \sum_{i=l}^k w_{e_i} + w_{h_k}\;,
$$
On the other hand the gain of the algorithm is
$$
\sum_{i=1}^k w_{e_i} + w_{h_k} = \sum_{i=1}^{l-1}(\beta w_{h_i} + c_i ) + \sum_{i=l}^k w_{e_i} + w_{h_k}\;.
$$
Hence for fixed $l$, the competitive ratio of {\FIFOQueEH} in the stage is at most
$$
\calR_{1,l} = \frac{\sum_{i=1}^{l-1} h_i  + (k-l+2) \beta h_l + \sum_{i=1}^{l-1} c_i +
\sum_{i=l}^{k} e_i + h_k}{\sum_{i=1}^{l-1} \beta h_i + \sum_{i=1}^{l-1} c_i + \sum_{i=l}^{k} e_i + h_k}\;.
$$

\begin{fact}
\label{fact:subtracting_from_up_and_down}
For any $\gamma,x,m > 0$ and $n>x$, either $\frac n m <
\frac 1 \gamma$ or $\frac{n -x }{m - \gamma x} \geq \frac n m$.
\end{fact}

We upper-bound $\calR_{1,l}$. All the following inequalities follow from
Fact \ref{fact:subtracting_from_up_and_down} (for $\gamma = 1$ or $\gamma = \beta$):
\begin{align*}
\calR_{1,l}
\leq &\;
\frac{\sum_{i=1}^{l-1} h_i  + (k-l+2) \beta h_l +
\sum_{i=l}^{k} e_i + h_k}{\sum_{i=1}^{l-1} \beta h_i + \sum_{i=l}^{k} e_i + h_k} \\
\leq &\;
\frac{\sum_{i=1}^{l-1} h_i  + (k-l+2)\beta h_l + \sum_{i=l}^{k} \beta h_i + h_k}{\sum_{i=1}^{l-1} \beta h_i + \sum_{i=l}^k \beta h_i + h_k} \\
\leq &\;
\frac{(k-l+2)\beta h_l + \sum_{i=l}^{k} \beta h_i + h_k}{\sum_{i=l}^k \beta h_i + h_k} \\
= &\; 1 + \frac{(k-l+2)\beta h_l}{\sum_{i=l}^k \beta h_i + h_k} \enspace.
\end{align*}
Fix $k$, $l$ and $h_l$. The fraction above is maximal when $h_{l+1}, \ldots
h_k$ are minimal. As $\alpha h_{i+1} \geq h_i$, minimal $h_l , \ldots h_k$
form a geometric progression with a common ratio of $\alpha^{-1}$. Thus by
taking $h_i=\alpha^{k-i} h_k$ for $i \geq l$ and fixing $m := k-l+1 \geq 1$,
we obtain
$$
\calR_{1,l} \leq 1 + \frac{(k-l+2)\beta h_l}{\sum_{i=l}^k \beta h_i + h_k} \leq
1 + \frac{(m+1)\beta \alpha^{m-1} }{\sum_{i=0}^{m-1} \beta \alpha^i + 1} = {\cal B}_m \;.
$$
For $m=1$ we have
$
{\cal B}_1 = 1+ 2 \beta/(1+\beta) = 1.8$, and
for $m=2$ we have
${\cal B}_2 = 1 + 3 \beta \alpha/(\beta ( 1 + \alpha ) + 1)
= 1 + 9/13 <1.8$.
As for $m \geq 3$, we show that ${\cal B}_m < {\cal B}_{m-1} $:
\begin{multline*}
{\cal B} _m -1 = \frac{(m+1)\beta \alpha^{m-1} }{\beta  \sum_{i=0}^{m-1} \alpha^i + 1 } <
\frac{(m+1)\beta \alpha^{m-1} }{\beta  \sum_{i=0}^{m-2} \alpha^i + 1 }= \\
=\alpha \cdot \frac{m+1}{m} \cdot \frac{m\beta \alpha^{m-2} }{\beta  \sum_{i=0}^{m-2} \alpha^i + 1 }\leq
\frac{m\beta \alpha^{m-2} }{\beta  \sum_{i=0}^{m-2} \alpha^i + 1 } = {\cal B}_{m-1} -1\; .
\end{multline*}
Thus all $\mathcal{R}_{1,l}$ are upper-bounded by $1.8$.

\noindent \mycase{2b}
Suppose $e_k \beforeeq a \before h_k$.
The adversary collected only the items preceding
$h_k$ in the queue, thus gaining at most $h_1 + \ldots + h_{k} + 2h_k$,
as in step $i \in \{ 1 \ldots k-1 \}$ it can collect item of weight greater than
$h_i$, in steps $k,k+1$ it can collect items with weight almost $h_k$ and
gain credit for $h_k$.
Hence the competitive ratio of {\FIFOQueEH} can be upper-bounded by
$$
\calR_{2} = \frac{\sum_{i=1}^k h_i + 2 h_{k}}{\sum_{i=1}^k e_i +  h_k}
\leq \frac{\sum_{i=1}^k h_i + 2 h_{k}}{\sum_{i=1}^k \beta h_i +  h_k} =
{\cal B}
$$
and by Fact \ref{fact:subtracting_from_up_and_down} for $\gamma = \beta$ either
${\cal B} < \frac 1 \beta = 1.5$ or
$$
{\cal B} \leq \frac{h_k + 2 h_{k}}{\beta h_k +  h_k} = \frac{3}{1+\beta} = 1.8
\;.
$$

\noindent \mycase{2c}
Suppose $h_k \beforeeq a$. Then he cannot get credit for
any item taken by the algorithm, due to the EEF property. In this case
the adversary's gain is at most
$h_1 + \ldots + h_{k+1}$,
which gives smaller ratio than the one obtained
in case (2b), as $\alpha > \frac{1}{2}$.
This concludes the proof.
\qed
\end{proof}

\begin{proof}[of Theorem \ref{thm: tight analysis of FIFOQueEH}]
Consider two instances. In the first one, we have
$a \before b \before c \before d$, with $w_a=w_b=\beta - \epsilon$,
$w_c=\beta$, $w_d=1$. Item $a$ is deleted right after the first step, and item
$b$ right after the second step. {\FIFOQueEH} collects $c$ in the first step and $d$
in the second step, while the adversary collects all four items.
Adversary's gain is $1+3\beta-2\epsilon$
and the {\FIFOQueEH}'s gain is $1 + \beta$. Thus the competitive ratio is arbitrarily
close to $(3 \beta + 1)/(\beta + 1)$.

In the second instance,
$a \before b \before c \before d$, with $w_a=\beta$, $w_b=w_c=1-\epsilon$,  $w_d=1$. Items $a$, $b$ and $c$ are deleted right after the second step.
{\FIFOQueEH} collects $a$ in the first step and $d$ in the second step,
so its gain is $1+ \beta$. The adversary collects $b$, $c$, $d$ (in this
order), so his gain is $3-2\epsilon$. Thus the competitive ratio is
arbitrarily close to $3/(1+\beta)$.

From these two instances,
we get that the ratio of {\FIFOQueEH} is at least $\max\braced{3\beta+1,3}/(\beta+1)$,
and this quantity is at least $1.8$ for any $\beta$.
\qed
\end{proof}

\subsection{Upper Bound for Non-Decreasing Weights}

\paragraph{Set dominance relation.}
Let $X,Y$ be two finite sets of numbers.
We say that $X$ \emph{dominates} $Y$, denoted $X\succeq Y$,
if either $Y=\emptyset$ or $\max X \geq \max Y$ and
$(X-\max X) \succeq (Y- \max Y)$. Note that we do not
require that $|X| = |Y|$. In particular, $X\succeq \emptyset$,
for any $X$.

For any set $T$ and a number $u$,
let $\sharp_u(T) = |\braced{t\in T \suchthat t\ge u}|$.
We show that the majorization can be described in terms of $\sharp_u$
The following lemma is routine and we omit the proof.
\begin{lemma}\label{fact:dom}
The following three conditions are equivalent:
\begin{description}
	\item{(i)} $X\succeq Y$,
	\item{(ii)} There is an injection $f: Y\to X$ such that
				$f(y) \ge y$ for all $y$.
	\item{(iii)} For every $x$ we have $\sharp_x(X) \geq \sharp_x(Y)$.
\end{description}
\end{lemma}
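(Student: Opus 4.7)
The plan is to prove the equivalence via the cycle (i) $\Rightarrow$ (iii) $\Rightarrow$ (ii) $\Rightarrow$ (i). All three implications are short inductions on $|Y|$ (or $|X|+|Y|$), and none involves significant computation; the only thing that needs a little care is the reshuffling step in (ii) $\Rightarrow$ (i).

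For (i) $\Rightarrow$ (iii), I would induct on $|Y|$. If $Y=\emptyset$ the claim is immediate. Otherwise set $M_X=\max X$ and $M_Y=\max Y$; by the definition of $\succeq$, $M_X\ge M_Y$ and $X-M_X \succeq Y-M_Y$. Fix a threshold $u$. If $u>M_X$, both counts are $0$. If $M_Y<u\le M_X$, then $\sharp_u(Y)=0$ and the inequality is trivial. If $u\le M_Y$, both $M_X$ and $M_Y$ contribute to their respective counts, so $\sharp_u(X)=\sharp_u(X-M_X)+1$ and $\sharp_u(Y)=\sharp_u(Y-M_Y)+1$, and the inductive hypothesis applied to $X-M_X\succeq Y-M_Y$ closes the case.

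For (iii) $\Rightarrow$ (ii), I would sort $Y$ in decreasing order $y_1\ge y_2\ge\cdots\ge y_m$ and $X$ in decreasing order $x_1\ge x_2\ge\cdots\ge x_n$, then simply set $f(y_i)=x_i$. This is obviously an injection, and to check $f(y_i)\ge y_i$ note that $\sharp_{y_i}(Y)\ge i$, so by (iii), $\sharp_{y_i}(X)\ge i$, meaning at least $i$ elements of $X$ are $\ge y_i$; in particular $x_i\ge y_i$.

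For (ii) $\Rightarrow$ (i), I would again induct on $|Y|$, with the empty case trivial. For the step, let $y^*=\max Y$ and $M_X=\max X$. Since $M_X\ge f(y^*)\ge y^*=\max Y$, the first defining condition of $\succeq$ is met. To get the recursive condition $X-M_X \succeq Y-y^*$, I will massage $f$ into an injection from $Y-y^*$ into $X-M_X$ satisfying the same pointwise inequality: if $f(y^*)=M_X$ just restrict; if $M_X=f(y')$ for some $y'\ne y^*$, swap, using $f(y^*)\ge y^*\ge y'$ to preserve the inequality at $y'$; if $M_X$ is not in the image of $f$, restrict $f$ to $Y-y^*$, which already maps into $X-M_X$. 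The inductive hypothesis then yields $X-M_X\succeq Y-y^*$, and we conclude $X\succeq Y$.

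The only mildly delicate step is the swap in (ii) $\Rightarrow$ (i); everything else is bookkeeping. This is exactly why the paper calls the proof routine, and I expect the writeup to be short.
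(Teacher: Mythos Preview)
Your proof is correct and is exactly the kind of routine argument the paper has in mind; in fact the paper explicitly omits the proof (``The following lemma is routine and we omit the proof''), so there is nothing to compare against. The cycle (i) $\Rightarrow$ (iii) $\Rightarrow$ (ii) $\Rightarrow$ (i) with the swap trick in the last step is the standard way to do this.
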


\begin{lemma}\label{lem: dominance update}
Suppose that $X\succeq Y \neq \emptyset$. Then
\begin{description}
	\item{(i)} $X-\min X \succeq Y-\min Y$.
	\item{(ii)} If $x\in X\cap Y$ then $X-x\succeq Y-x$.
	\item{(iii)} If $X,Y\subseteq Z$, $y\in Z-Y$,
	and $x \ge \max\braced{z\in Z-X \suchthat z \le y}$
				then $X\cup x \succeq Y\cup y$.
				(In particular, this holds for $x\ge y$.)
\end{description}
\end{lemma}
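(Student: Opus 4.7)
My plan is to prove all three parts uniformly through the $\sharp_u$ characterization of dominance from Lemma~\ref{fact:dom}, using the identities $\sharp_u(S-s)=\sharp_u(S)-[s\ge u]$ and $\sharp_u(S\cup\{s\})=\sharp_u(S)+[s\ge u]$, with $\cup$ and $-$ read as multiset operations (the only reading compatible with the \MarkAndPick\ analysis, where weights can repeat). As a preliminary I will note that $X\succeq Y$ forces $|X|\ge|Y|$, obtained by applying the $\sharp_u$ condition at any threshold strictly below every element.

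For part~(i), I will compute
\[
\sharp_u(X-\min X) - \sharp_u(Y-\min Y) = \bigl(\sharp_u(X)-\sharp_u(Y)\bigr) - \bigl([\min X\ge u] - [\min Y\ge u]\bigr),
\]
and isolate the only delicate case, $\min Y<u\le\min X$: there $\sharp_u(X)=|X|$ while $\sharp_u(Y)\le|Y|-1$ since $\min Y$ drops below $u$, so the first difference is at least $|X|-|Y|+1\ge 1$, absorbing the $+1$ from the indicator gap. Part~(ii) reduces to a one-line cancellation: $\sharp_u(X-x)-\sharp_u(Y-x) = \sharp_u(X)-\sharp_u(Y)\ge 0$, since the two $[x\ge u]$ terms cancel.

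Part~(iii) is the core of the lemma. Setting $\tilde X=X\cup\{x\}$ and $\tilde Y=Y\cup\{y\}$, the desired dominance rearranges to $\sharp_u(X)-\sharp_u(Y)\ge[y\ge u]-[x\ge u]$, which is automatic for every $u$ except in the \emph{bad range} $x<u\le y$. In that range I need the strict gain $\sharp_u(X)\ge\sharp_u(Y)+1$, and this is precisely where the hypothesis on $x$ bites: any $z\in Z$ outside $X$ with $z\le y$ satisfies $z\le x<u$, so every $z\in Z$ lying in $[u,y]$ must already be in $X$ with its full $Z$-multiplicity. In particular $y\in X$, and its multiplicity in $X$ strictly exceeds that in $Y$, since $y\in Z-Y$. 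Splitting the elements $\ge u$ at $y$, the count on $(y,\infty)$ is controlled by $X\succeq Y$ at any threshold just above $y$, while on $[u,y]$ the inclusion $Y\cap[u,y]\subseteq Z\cap[u,y]\subseteq X$ combined with the extra contribution of $y$ yields $|X\cap[u,y]|\ge|Y\cap[u,y]|+1$. Adding the two pieces gives the required strict inequality.

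The main obstacle I anticipate is exactly this bad-range argument in~(iii): pinning down that the hypothesis on $x$ is precisely what forces $y\in X$ with its full $Z$-multiplicity, and being careful to extract the strict $+1$ gain from the $[u,y]$ comparison without double-counting multiplicities. Parts~(i) and~(ii) should reduce to indicator bookkeeping once the $\sharp_u$ machinery is in place.
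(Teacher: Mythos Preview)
Your proposal is correct and follows essentially the same approach as the paper: both prove (iii) via the $\sharp_u$ characterization of Lemma~\ref{fact:dom}, isolating the range $x<u\le y$ and arguing that $X\cap[u,y]=Z\cap[u,y]$ together with $y\in Z-Y$ forces $|X\cap[u,y]|>|Y\cap[u,y]|$, which supplies the needed $+1$. The minor differences are that the paper declares (i) and (ii) straightforward and first reduces (iii) to the extremal choice $x=\max\{z\in Z-X:z\le y\}$, while you treat general $x$ directly and make the split of $\sharp_u$ at $y$ explicit; your multiset reading is a harmless extension, as the paper assumes distinct weights.
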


\begin{proof}
Parts (i) and (ii) are straightforward, so we only show (iii).
Let $x' = \max\braced{z\in Z-X \suchthat z \le y}$. It is
sufficient to show (iii) for $x = x'$.

We use Lemma~\ref{fact:dom}(iii).
For $u \le x'$, since $X\succeq Y$, we have
$\sharp_u(X\cup x') = \sharp_u(X)+1
			\ge \sharp_u(Y)+1
			= \sharp_u(Y\cup y)$.
For $u > y$, we have
$\sharp_u(X\cup x') = \sharp_u(X)
				\ge \sharp_u(Y)
			= \sharp_u(Y\cup y)$.
Suppose $x' < u \le y$.
Since $y\in Z-Y$ and $X \cap [u,y] = Z\cap [u,y]$,
we have $|X\cap [u,y]| > |Y\cap [u,y]|$, and therefore
$\sharp_u(X\cup x') = \sharp_u(X)
				\ge \sharp_u(Y) + 1
				 = \sharp_u(Y\cup y)$.
\qed
\end{proof}


For simplicity, we assume that all weights are different.
If there are equal weights, we can perturb them slightly or
extend the weight ordering using the item indices.
For sets $B$ and $C$ of items, we say that $B$
dominates $C$, writing $B\succeq C$, if $\braced{w_b\suchthat b\in B}$
dominates $\braced{w_c\suchthat c\in C}$.
We write $B\succeq aC$ if
$\braced{w_b\suchthat b\in B}$
dominates $\braced{aw_c\suchthat c\in C}$.

Assume that the adversary has EEF property.
Consider the behavior of the adversary. By the EEF property,
for any $i,j\in \calS$,
we can assume that if $j$ is the item collected in
step $t$ and $i\before j$ then the adversary will
not collect item $i$ in the future. Thus, instead of considering
the whole set of pending adversary items, we can
restrict ourselves only to those that are after the one he
collected last. Let $C_t$ be the set of these items.
We update $C_t$ as follows: at each step, the deleted items
are removed from $C_t$ and released items are added to $C_t$.
Then, if the adversary collects an item $j$, we remove
all items $i\beforeeq j$ from $C_t$.

To organize the accounting so that we can pay for these
extra items, we do two things. One, we do not immediately
give {\MarkAndPick} credit for the collected items. We give
it credit for collecting $i$ only at the time when the adversary
collects an item at least as heavy as $i$.
This is formalized below in Lemma~\ref{lem: det queue incr weight invariant}
where the collected items $i$ that are heavier than the
maximum adversary item contribute to the both sides of
invariant (a) (they get included in $L_t$ and $L'_t$),
and only when the adversary collects an item greater
than $i$, item $i$ will be removed from $L'_t$ and
contribute to preserving the invariant.

Next, we keep track of the adversary's extra items so that
we can pay for them later. When the adversary collects
an extra item $j$ and we mark $h$, we know that
$w_j\le w_h/\phi$. We store $j$ in a separate set $E'$
and $h$ in $M'$. Later, when there is an idle step
at which the adversary collects $h$, since $h$ has
already been collected by the algorithm, its
budget $\phi w_h$ pays both for $w_h$ (for the mark on $h$)
and for $w_j$. Once the adversary "consumed" the extra
step by collecting $h$, we move $j$ to the set $E$ of the
extra adversary items that we already paid for.

We represent our invariants in terms of the
domination relations between some varying sets of items.
The reason for this is that items can be added and
removed from these sets in this process, and it does not
seem possible to maintain appropriate bounds only
between the total weights of these sets.

\begin{figure}[h]
\begin{center}
\includegraphics[width=3.25in]{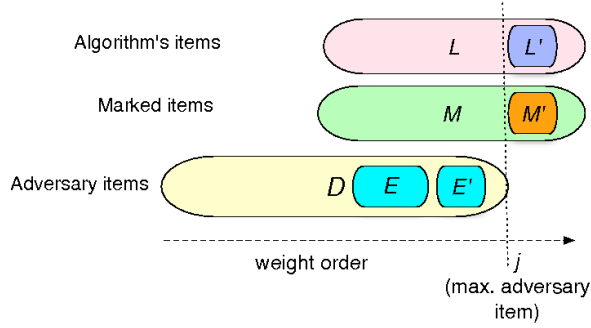} %
\caption{Notation.}\label{fig: notation}
\end{center}
\end{figure}


\smallskip
\emph{Notation:} Symbols
$D_t$, $M_t$, and $L_t$ represent respectively the
sets of items collected by the adversary, marked by the
algorithm, and collected by Algorithm~{\MarkAndPick} up to and including
step $t$. $L'_t = L_t\cap C_t$ is the set of algorithm's items
that the adversary may still collect in the future.
Let also $e_t = |D_t| - |M_t|$ and $\ell_t = |L'_t|$.
Figure~\ref{fig: notation} illustrates this notation, as
well as the sets introduced in the lemma below.

\begin{lemma}\label{lem: det queue incr weight invariant}
For each time step $t$, there exist
disjoint sets $E_t, E'_t\subseteq D_t$ with $|E_t| = e_t$ and
$|E'_t|=\ell_t$,
and a set $M'_t\subseteq M_t$ with $|M'_t| = \ell_t$, such that
\begin{description}
	\item{(a)} $\phi w(L_t- L'_t) \ge w(M_t -M'_t) + w(E_t)$,
	\item{(b)} $\phi L'_t \succeq M'_t \succeq L'_t$,
	\item{(c)} $M_t \succeq \left(D_t - E_t -E'_t\right) \cup L'_t$, and
	\item{(d)} $M'_t \succeq \phi E'_t$.
\end{description}
\end{lemma}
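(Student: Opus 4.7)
My plan is to prove this by induction on the step index $t$. The base case $t = 0$ is immediate since $D_0 = M_0 = L_0 = L'_0 = \emptyset$, and taking $E_0 = E'_0 = M'_0 = \emptyset$ satisfies every invariant vacuously. For the inductive step, I would decompose each time step into the auxiliary updates to $C_t$ caused by deletions and insertions (which affect only $C_t$ and hence $L'_t$), followed by the algorithm's action (marking $h$, the heaviest unmarked item, and collecting the earliest pending $i$ with $w_i \geq w_h/\phi$, if any), and finally the adversary's collection of some $j \in C_t$. I would give explicit update rules for $E_t, E'_t, M'_t$ that preserve each of the four invariants (a)--(d).

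Deletion events either leave the tracked sets unchanged or move one element from $L'_t$ into $L_t - L'_t$; in the latter case I remove the partner of this element in $M'_t$ (using the injection guaranteed by Lemma~\ref{fact:dom}(ii) applied to invariant (b)) and invoke Lemma~\ref{lem: dominance update}(ii) to maintain the dominance relations. For the action step, I case-split based on (i) whether the algorithm is idle or collecting, and (ii) how the adversary's $j$ relates to the algorithm's $i$ and to $L'_t$. In the primary case---algorithm collects $i$, adversary picks $j$ so that no new ``extra'' item arises---I extend $M_t, L_t, D_t, M'_t$ by the obvious singletons, and Lemma~\ref{lem: dominance update}(iii) then yields the updated dominance in (b) and (c), using the algorithmic guarantee $w_h \leq \phi w_i$. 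When the adversary instead picks a lighter $j$ that the algorithm would not have collected (so $w_j \leq w_h/\phi$), I record $j$ into $E'_{t+1}$ and pair it with $h$ in $M'_{t+1}$; this immediately ensures (d) because $w_h \geq \phi w_j$.

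The main obstacle is the idle step, where no pending item meets the threshold $w_h/\phi$; then $h$ itself must already be collected by the algorithm, i.e.\ $h \in L'_t$. The adversary's collection causes $h$ (or a sibling item) to leave $L'_t$, freeing $\phi w_h$ on the left of invariant (a). I would use this freed budget to \emph{retire} a pending obligation by moving some $j^* \in E'_t$ into $E_{t+1}$ and releasing its partner from $M'_t$ into $M_t - M'_t$; the inequality $w_{h^*} + w_{j^*} \leq \phi w_{j^*} + w_{j^*} \leq \phi w_h$, obtained from (d) and the fact that $w_{j^*} \leq w_h$, closes the arithmetic in (a). Verifying that this scheme always works rests on showing that $E'_t$ is non-empty precisely when an idle step occurs---this should follow from the size constraint $|M'_t| = |E'_t| = \ell_t$ combined with (b) and (d), since any marked-and-partnered $h$ triggering an idle step must correspond to a previously stored extra item. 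Making this ``transfer of responsibility'' bookkeeping rigorous across all sub-cases, and confirming that every $j$ collected by the adversary falls into exactly one of the categories ordinary/extra/transferring, is the heart of the proof.
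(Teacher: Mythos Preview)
Your inductive skeleton matches the paper's, but two concrete pieces are wrong and would derail the argument. First, you mischaracterize the idle step: in {\MarkAndPick} the algorithm waits only when there are \emph{no pending items at all}, not when ``no pending item meets the threshold $w_h/\phi$''. In the paper's Case~B one therefore has $L'_t = M'_t$ (every active item after $j$ is already collected and, by the weight ordering and (b), the two sets coincide); the adversary's $j$ is the minimum of $L'=M'$, and the item moved from $E'$ to $E$ is the \emph{minimum} $b\in E'$. The amortization then uses $w_j \ge \phi w_b$ (from (d), since $j\in M'$) together with the identity $-w_j = -\phi w_j + w_j/\phi$. Your displayed inequality $w_{h^*}+w_{j^*}\le \phi w_{j^*}+w_{j^*}$ has the direction of (d) reversed: (d) gives $w_{h^*}\ge \phi w_{j^*}$, not $\le$, so the chain you wrote does not hold.

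Second, in the non-idle case where $i\after j$ (the paper's Cases~A.2 and A.3), simply ``pairing $j$ with $h$ in $M'$'' is not enough to preserve the second half of (b), $M'\succeq L'$. The paper chooses the element $f$ to add to $M'$ carefully: $f=h$ only when $i\beforeeq h$; otherwise one takes the largest marked $f\beforeeq i$ not already in $M'$, and appeals to Lemma~\ref{lem: dominance update}(iii). Without this refinement (b) can fail. Two smaller gaps: when an element leaves $L'$ (whether by deletion or because the adversary passes it), you must also shrink $E'$ to keep $|E'|=\ell$, which your deletion rule omits; and removing an arbitrary ``partner'' via the injection of Lemma~\ref{fact:dom}(ii) does not obviously preserve both sides of (b) and (d) simultaneously---the paper instead always removes the \emph{minimum} elements of $L'$, $M'$, $E'$ and invokes Lemma~\ref{lem: dominance update}(i).
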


\begin{proof}
We show that the invariant in the lemma is preserved.
For simplicity, we omit the subscript $t$ and
write $D= D_t$, $M= M_t$, etc. Also,
let $\Delta w(D) = w(D_{t+1}) - w(D_t)$, $\Delta w(M) = w(M_{t+1}) - w(M_t)$,
and so on.

We view the process as follows: at each step,
\begin{description}
	
	\item{(I)} The adversary first inserts items into $\calS$,
	\item{(II)} then he selects the item $j$ to be collected,
	\item{(III)} next, the adversary deletes some items from $\calS$ (of course,
			only the items that are before $j$ in $\calS$ can be deleted),
	\item{(IV)} finally, both the adversary and the algorithm collect their items.

\end{description}

In order to show (a), we need to show that
\begin{eqnarray}
	\phi \Delta w(L) +\Delta w(M')
			&\ge& \phi\Delta w(L') + \Delta w(M) + \Delta w(E).
		\label{eqn: amortization}
\end{eqnarray}
In addition, we need to show that (b), (c) and (d) are preserved.

We look at all sub-steps separately.
(I) Insertions do not affect the invariants. (None of the
sets $M$, $D$, $L$, $L'$ changes, and we do not change
sets $M'$, $E$, and $E'$.)
In (II), suppose the adversary selects $j$ and $j'$ was
an item collected by the adversary in the previous step.
There may have been some items $i$, $j'\before i\before j$, that
were in $L'$. Since these items are now removed from $C$,
they are also removed from $L'$, and we need to update
$M'$ and $E'$ so that they have the same cardinality as $L'$,
and in such a way that the invariants are preserved.
Let $i\in L'$ be such an item with
minimum weight, $g$ the minimum-weight item in $M'$ and $e$
the minimum-weight item in $E'$.
We remove $i$ from $L'$, $g$ from $M'$ and $e$ from $E'$.
Since $\phi w_i\ge w_g$, by (b), we have
$\phi \Delta w(L) + \Delta w(M') = 0 -w_g
			\ge -\phi w_i + 0 + 0
			 = \phi\Delta w(L') + \Delta w(M) + \Delta w(E)$,
so (a) is preserved. Invariants (b) and (d) are preserved
because we remove the minimum items from $L'$, $M'$, and
$E'$. In (c), the left-hand side does not change and on
the right-hand side we remove $i$ from $L'$ and add $e$ to
$D-E-E'$, and by (b) and (d) we have $w_i\ge w_g/\phi \ge w_e$,
so the right-hand side cannot increase.
In sub-step (III), deletions do not affect the invariants.

The rest of the proof is devoted to sub-step (IV).
We examine (\ref{eqn: amortization}) and the changes
in (b), (c) and (d)
due to the algorithm and the adversary collecting their items.

\smallskip
\noindent
\mycase{A}
There is at least one pending item.
The algorithm marks $h$ and collects the earliest pending
item $i$ such that $w_i\ge w_{h}/\phi$. Thus $h$ is added to $M$
and $i$ is added to $L$. We do not change $E$.
We have some sub-cases.

\smallskip
\noindent
\mycase{A.1}
$i \beforeeq j$. Then $i$ is not added to $L'$, and we do not change $E'$.
Since $\phi\Delta w(L) = \phi w_i \ge w_h = \Delta w(M)$
and $\Delta w(E) = 0$, to prove (\ref{eqn: amortization})
it is now sufficient to show that
\begin{eqnarray}
\Delta w(M') &\ge& \phi\Delta w(L').
		\label{eqn: amortization A1}
\end{eqnarray}
If $j\notin L$ (note that this included the case $i=j$),
then $L'$ does not change and we do not change $M'$, so
(\ref{eqn: amortization A1}) is trivial. In (b) and (d) nothing
changes. We add the maximum unmarked item $h$ to $M$ and $j$ to $D$,
so (c) follows from Lemma~\ref{lem: dominance update}(iii).

If $i\neq j$ and $j\in L$ and
then let $g$ be the minimum item in $M'$ and
$e$ the minimum item in $E'$.
Item $j$ is removed from $L'$ and we remove $g$ from $M'$ and
$e$ from $E'$.
Since, by (b), $\phi w_j\ge w_g$, we have
$\Delta w(M') = -w_g \ge -\phi w_j = \phi\Delta w(L')$, so
(\ref{eqn: amortization A1}) holds.
Since $j$, $g$ and $e$ are minimal, (b) and (d) are preserved.
In (c), moving $j$ from $L'$ to $D$ does not change the right-hand
side. We also add $h$ to $M$ and
$e$ to $D-E-E'\cup L'$, so (c) is preserved because
of the choice of $h$ and Lemma~\ref{lem: dominance update}(iii).

\smallskip
\noindent
\mycase{A.2}
$i \after j$ and $j\notin L$.
Then $i$ is added to $L'$. We also add $j$ to $E'$.
Since $\Delta w(L) = \Delta w(L') = w_i$ and $\Delta w(M) = w_h$,
to show (\ref{eqn: amortization}) it is sufficient to show that
\begin{eqnarray}
\Delta w(M') &\ge& w_h.
		\label{eqn: amortization A2}
\end{eqnarray}
Since $|L'|$ increased,
we need to add one item $f$ to $M'$. We choose this $f$ as
follows: If $i\beforeeq h$, we choose $f=h$.
Otherwise, if $i \after h$ then, by the choice of $h$, we get that
all items $h\beforeeq f\beforeeq i$ are now marked.
In this case,
we choose the largest $f\beforeeq i$ such that $f\notin M'$
and add it to $M'$. Since $h\in M-M'$, $h$ itself is a
candidate for $f$, so we have $h \beforeeq f \beforeeq i$.

Note that, in this case, by the choice of
$i$ (as the earliest pending item with weight at least $w_h/\phi$),
$j\before i$ and $j\notin L$, we have $w_f \ge w_h \ge \phi w_j$.
In particular, this means that $j \before h$ and that
all items $h\beforeeq f'\before i$ are in $L'$.
	
Since $w_f\ge w_h$, (\ref{eqn: amortization A2}) is trivial.
Invariant (d) is also quite easy, since $w_f\ge \phi w_j$,
by the previous paragraph.
In (c), adding $j$ to $D$ and $E'$ does not change the right-hand side.
We also add $h$ to $M$ and  $i$ to $L'$, which preserves (c)
by the choice of $h$ and Lemma~\ref{lem: dominance update}(iii).

To show (b), if $i\beforeeq h$ then $f=h$ and,
since $\phi w_i \ge w_h \ge w_i$, invariant (b) is preserved.
If $i\after h$, then,
$\phi w_i \ge w_i \ge w_f$, so
the first part of (b) is preserved.
That the second part of (b) is preserved follows
from the choice of $f$ and
Lemma~\ref{lem: dominance update}(iii).

\smallskip
\noindent
\mycase{A.3}
$i\after j$ and $j\in L$.
Then we remove $j$ from $L'$ and add $i$.
We thus have $\Delta w(L) = w_i$,
$\Delta w(L') = w_i - w_j$ and $\Delta w(M) = w_h$.
Thus to show (\ref{eqn: amortization}) it is sufficient to show that
\begin{eqnarray}
\Delta w(M')+ \phi w_j &\ge& w_h.
		\label{eqn: amortization A3}
\end{eqnarray}
We do not change $E'$.
To update $M'$, we proceed as follows.
Let $g$ be the lightest item in $M'$. Since $j$ is
the minimal element of $L'$,
(b) implies $w_j\le w_g\le \phi w_j$.
We first remove $g$ from $M'$.
Next, we proceed similarly as in the previous case,
looking for an item $f$ that we can add to
$M'$ to compensate for removing $g$ (since $|M'|$ cannot
change in this case.)
Let $h' = \max(g,h)$. Note that $h'\in M-M'$.
If $i\beforeeq h'$, we choose $f=h'$.
Otherwise, if $i \after h'$ then, by the choice of $h'$, we get that
all active items $h'\beforeeq f\beforeeq i$ are marked. In this case,
we choose the largest $f\beforeeq i$ such that $f\notin M'$
and add it to $M'$. Since $h'\in M- M'$, $h'$ itself is a
candidate for $f$, so we have $h' \beforeeq f \beforeeq i$.

Note that, in this case, by $j\beforeeq g$, all items
$h'\beforeeq f\before i$ are active and,
by the choice of $i$, they are all in $L'$.
	
Now, in (\ref{eqn: amortization A3}) we have
$\Delta w(M')  + \phi w_j
	= (-w_g+ w_f) + \phi w_j
	\ge w_f \ge w_{h'} \ge w_h$.
In (d), the left-hand side can only increase
(since $f\ge g$) and the right-hand side does not change.
In (c), moving $j$ from $L'$ to $D$ does not change
the right-hand side. We also
added $h$ to the left-hand side and
$i$ to $L'$ on the right-hand side, so (c) is
preserved by the choice of $h$ and
Lemma~\ref{lem: dominance update}(iii).

In (b), removing $j$ from $L'$ and $g$ from $M'$ does
not affect the invariant. Then we add $f$ to $M'$ and
$i$ to $L'$. By the algorithm, we have $\phi w_i \ge w_h$,
while by the case assumption and (b), we have
$\phi w_i \ge \phi w_j \ge w_g$. Therefore
$\phi w_i \ge w_{h'}$. Since either $f=h'$
or $f\beforeeq i$, this implies
$\phi w_i \ge w_f$, showing that the first
inequality in (b) is preserved.
The second part of (b) follows again from
the choice of $f$ and Lemma~\ref{lem: dominance update}(iii).

\smallskip
\noindent
\mycase{B}
There are no pending items for the algorithm.
It means that $L'$ contains all active
items $i\aftereq j$, including $j$.
By the weight ordering
assumption and the second part of (b)
this implies that $L' = M'$.
Since the adversary collects an item and the
algorithm does not, $e=|D|-|M|$ increases by $1$, so
we also need to add an item to $E$. Let $b$ be the
minimum-weight item in $E'$.
We do this: we remove $j$ from $M'$ and from $L'$ and
we move $b$ from $E'$ to $E$.
Using the choice of $b$ and (d) we have $w_j\ge \phi w_b$,
so
\begin{multline*}
\phi\Delta w(L) + \Delta w(M') \;=\; 0 + (-w_j) \;=\; -\phi w_j + w_j/\phi \;\geq \\
				 \geq\; -\phi w_j + 0 + w_b \;=\; \phi \Delta w(L') + \Delta w(M) + \Delta w(E),
\end{multline*}
and thus (\ref{eqn: amortization}) holds.
By the choice of $j$ and $e$ as the minimum items in $L'$ and $E'$,
respectively, invariants (b) and (d) are preserved. In (c),
$j$ moves from $L'$ to $D$, and
$b$ moves from $E'$ to $E$, so the right-hand side does
not change.
\qed
\end{proof}


\begin{proof}[of Theorem \ref{thm: det queue incr weight}]
By the invariants of Lemma~\ref{lem: det queue incr weight invariant},
at each time step it holds that
\begin{eqnarray*}
\phi w(L_t) &\ge& [\phi w(L'_t)-w(M'_t)] + w(M_t) + w(E_t)
				\\
 		&\ge& 0 + [w(D_t - E_t - E'_t) + w(L'_t)] + w(E_t)
				\\
		&=&  w(D_t) + w(L'_t) - w(E'_t)
				\\
		&\ge&  w(D_t) + w(M'_t)/\phi - w(E'_t)
				\\
		&\ge& w(D_t) \enspace,
\end{eqnarray*}
and the $\phi$-competitiveness follows. \qed
\end{proof}

\section{Randomized Algorithms for Dynamic Queues}
\begin{proof}[of Theorem \ref{thm: dynamic queue rand m-less lower bound}]
Fix some online memoryless randomized algorithm $\calA$. Recall that by a memoryless
algorithm we mean an algorithm that makes a decision on which item to
collect based only on the weights of the pending items.

We consider the following scheme. Let $a > 1$ be a constant which we specify later
and $n$ be a fixed integer. At the beginning, the adversary inserts items
$a^0,a^1,\ldots,a^n$. (To simplify notation, in this proof we identify
items with their weights.) In our construction we assure that in each step,
the list of items which are pending for $\calA$ is equal to $(a^0,a^1,\ldots,a^n)$.
Since $\calA$ is memoryless, in each step it uses the same probability distribution
$(q_j)_{j=0}^n$, where $q_j$ is the probability of collecting item $a^j$.
As the algorithm always makes a move, $\sum_{i=0}^n q_i = 1$.

We consider $n+1$ strategies for an adversary, numbered $0,1,\ldots,n$.
The $k$-th strategy is as follows: in each step collect $a^k$,
delete all items $a^0,a^1,...,a^k$, and then issue new copies of all these items.
Additionally, if the algorithm collected $a^j$ for some
$j>k$, then the adversary issues a new copy of $a^j$ as well.
This way, in each step exactly one copy of each $a^j$ is
pending for the algorithm, while the adversary accumulates
copies of the items $a^j$ for $j>k$.

This step is repeated $T \gg n$ times, and after the last step
the adversary collects all uncollected items.
Since $T \gg n$, we only need to focus on the
expected amortized profits in a single step.

We look at the gains of $\calA$ and the adversary in a single step.
If the adversary chooses strategy $k$, then it gains $a^k$. Additionally,
at the end it collects the item collected by the algorithm if this item is
greater than $a^k$. Thus, its amortized expected
gain in single step is $a^k + \sum_{i>k} q_i a^i$. The expected
gain of $\calA$ is $\sum_i q_i a^i$.

For any probability distribution $(q_j)_{j=0}^n$ of the algorithm, the adversary chooses
a strategy $k$ which maximizes the competitive ratio. Thus, the competitive
ratio of $\calA$ is is at least
\[		
	\calR = \max_k \braced{ \frac{a^k + \sum_{j > k} q_j a^j}{\sum_j q_j a^j}}
		 \;\geq\; \sum_k v_k  \frac{a^k + \sum_{j > k} q_j a^j}{\sum_j q_j a^j} \enspace,
\]
for any coefficients $v_0,...,v_k\ge 0$ such that $\sum_k v_k = 1$.
Note that the latter term corresponds to the ratio forced
by a randomized adversary who chooses $k$ with probability $v_k$. In particular, we may
choose $v_k$ to be
the value for which the competitive ratio of such a randomized adversary strategy against
\emph{any deterministic} algorithm is the same. After solving the set of equations we get
\[
	v_k = \begin{cases}
		\frac{1}{M}  a^{n-k}  ( a-1 ) 	& \textnormal{if $k < n$} \\
		\frac{1}{M}  \left( a - n  (a-1) \right) & \textnormal{if $k = n$} \\
 	\end{cases}
	\quad\quad\text{ where }\; M = a^{n+1} - n  (a-1)
	\enspace.
\]
For these values of $v_k$ we get
\begin{align*}
\textstyle M \calR \sum_j q_j a^j
	\geq & \; \sum_k M v_k a^k + \sum_k M v_k \sum_{j>k} q_j a^j
	\\
\textstyle
	= & \; \sum_{k=0}^{n-1} M v_k a^k + M v_n a^n + \sum_j q_j a^j \sum_{k < j} M  v_k
	= \; a^{n+1}  \sum_j q_j  a^j \enspace.
\end{align*}
Therefore, $\calR \geq \frac{a^{n+1}}{M}$. This bound is maximized for $a = 1+1/n$, for which 
\[
\calR \;\geq\;
\left(1+\frac{1}{n}\right)^{n+1}\left(\left(1+\frac{1}{n}\right)^{n+1}
- 1\right)^{-1}
\]
which tends to $\frac{\e}{\e-1}$
 as $n \to \infty$.
\qed
\end{proof}

\end{appendix}

\end{document}